\theoremstyle{thmstyleone}%
\newtheorem{theorem}{Theorem}%  meant for continuous numbers
\theoremstyle{thmstyletwo}%
\theoremstyle{thmstylethree}%
\newtheorem{corollary}{Corollary}
\begin{document}

\title[VoteGCL]{Enhancing Graph-based Recommendations with \\ Majority-Voting LLM-Rerank Augmentation}

%%=============================================================%%
%% GivenName	-> \fnm{Joergen W.}
%% Particle	-> \spfx{van der} -> surname prefix
%% FamilyName	-> \sur{Ploeg}
%% Suffix	-> \sfx{IV}
%% \author*[1,2]{\fnm{Joergen W.} \spfx{van der} \sur{Ploeg} 
%%  \sfx{IV}}\email{iauthor@gmail.com}
%%=============================================================%%

\author[1]{\fnm{Minh-Anh} \sur{Nguyen}}\email{minh.na2@vinuni.edu.vn}

\author[2]{\fnm{Bao} \sur{Nguyen}}\email{nguyenngocbaocmt@gmail.com}

\author[1]{\fnm{Ha Lan} \sur{N.T.}}\email{lan.nth@vinuni.edu.vn}

\author[3]{\fnm{Tuan Anh} \sur{Hoang}}\email{anh.hoang62@rmit.edu.vn}

\author[4]{\fnm{Duc-Trong} \sur{Le}}\email{trongld@vnu.edu.vn}

\author*[1]{\fnm{Dung D.} \sur{Le}}\email{dung.ld@vinuni.edu.vn}

\affil*[1]{\orgname{Center for AI Research, VinUniversity}, \orgaddress{\city{Hanoi}, \country{Vietnam}}}

\affil[2]{\orgname{The Chinese University of Hong Kong}, \orgaddress{\city{Hong Kong}, \country{China}}}

\affil[3]{\orgname{RMIT University}, \orgaddress{\city{Hanoi}, \country{Vietnam}}}

\affil[4]{\orgname{VNU University of Engineering and Technology}, \orgaddress{\city{Hanoi}, \country{Vietnam}}}

%%==================================%%
%% Sample for unstructured abstract %%
%%==================================%%

\abstract{Recommendation systems often suffer from data sparsity, caused by limited user-item interactions, which degrades their performance and amplifies popularity bias in real-world scenarios. This paper proposes a novel data augmentation framework that leverages Large Language Models (LLMs) and item textual descriptions to enrich interaction data. By few-shot prompting LLMs multiple times to rerank items and aggregating the results via majority voting, we generate high-confidence synthetic user-item interactions, supported by theoretical guarantees based on the concentration of measure. To effectively leverage the augmented data in the context of a graph recommendation system, we integrate it into a graph contrastive learning framework to mitigate distributional shift and alleviate popularity bias. Extensive experiments show that our method improves accuracy and reduces popularity bias, outperforming strong baselines.}

\keywords{Graph Recommendation System, Data Augmentation, Large Language Model, Majority-Voting Mechanism}

%%\pacs[JEL Classification]{D8, H51}

%%\pacs[MSC Classification]{35A01, 65L10, 65L12, 65L20, 65L70}

\maketitle

\section{Introduction}
Recommendation systems (RS), particularly graph-based recommendation models, often suffer from data sparsity, which significantly degrades performance and amplifies popularity bias \cite{idrissi2020systematic}. Traditional data augmentation techniques such as random walk sampling, node and edge dropout \cite{wu2021self}, and subgraph sampling \cite{fan2023graph, li2023graph} generate alternative views of the interaction graph for use in self-supervised learning. These methods are simple and can improve accuracy in low-interaction regimes. Still, they rely solely on interaction data and overlook the potential of other modalities, such as item textual descriptions. 
\begin{figure}[t]
\centering
\begin{minipage}{0.50\columnwidth}
    \includegraphics[width=0.9\textwidth]{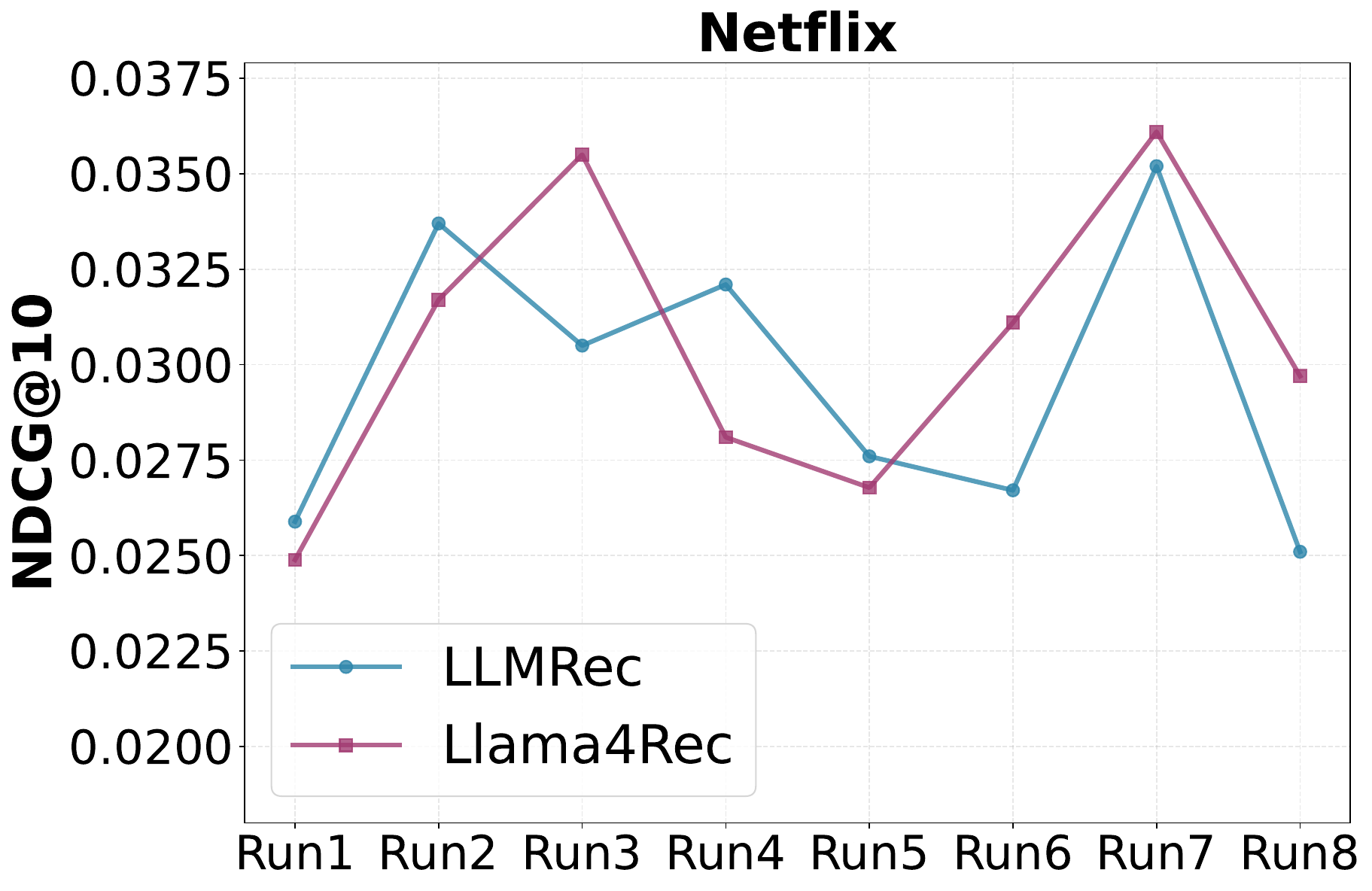}
\end{minipage}
\begin{minipage}{0.49\columnwidth}
    \centering
    \includegraphics[width=\textwidth]{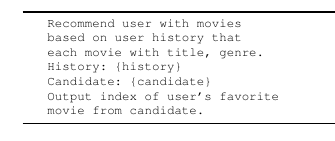}
\end{minipage}
\caption{NDCG@10 variation across repeated LLM augmentation runs on 1,000 Netflix users using a casual prompt. Variability in augmented items across runs leads to unstable recommendation quality. Higher NDCG@10 indicates better alignment with user preferences.}
\label{fig:accuracy_LLMRec}  % Đặt label SAU caption
\end{figure}
\begin{figure}[t]
\centering
    \includegraphics[width=\columnwidth]{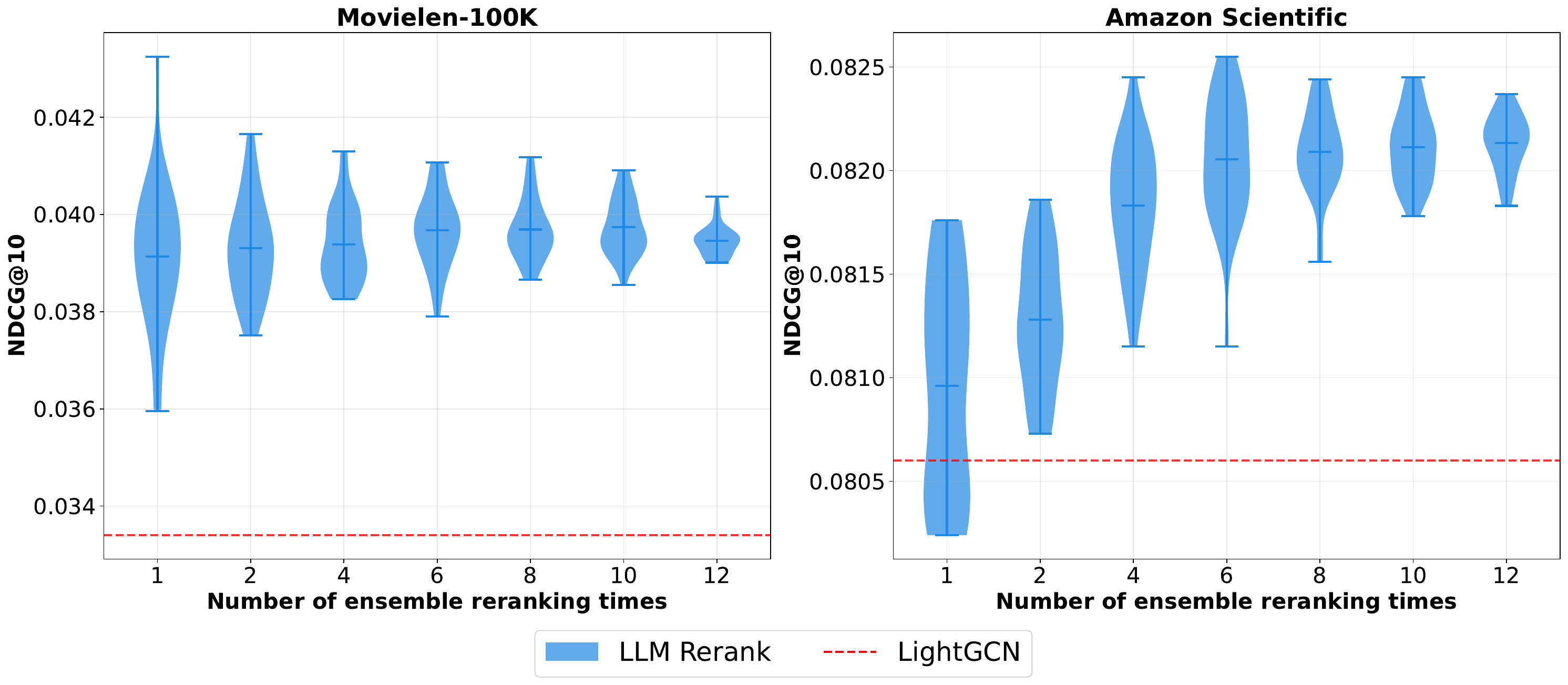}
\caption{Distribution of NDCG@10 across different reranking times over 30 repeated runs. Increasing the ensemble size via more reranking times improves the mean and reduces variance, indicating greater stability and robustness. LLM-based reranking consistently outperforms LightGCN, and ensemble methods further improve accuracy.}
\label{violin}  % Đặt label SAU caption
\end{figure}

Recent advances leverage the reasoning capabilities of Large Language Models (LLMs) and the rich semantic information in item descriptions to enhance user and item profiling for data augmentation \cite{wu2024survey}. While LLM-based methods show promise in enriching recommendation datasets using external knowledge, they suffer from two main limitations. First, the augmentation results generated by prompting LLMs are often inconsistent across different inference runs, leading to unstable performance. Specifically, using a commonly adopted prompt template from prior works to augment interaction data, we observe that both LLMRec \cite{hou2024large} and Llama4Rec \cite{luo2024integrating} yield fluctuating NDCG@10 scores on the Netflix dataset, as shown in Figure~\ref{fig:accuracy_LLMRec}. Second, embeddings derived from the textual content of LLM-embedded models tend to be out-of-distribution compared to collaborative embeddings, making it difficult to align them in a model-agnostic framework across various RS \cite{ren2024representation, qiao2024llm4sbr}.

To address the above challenges, we reformulate LLM-based data augmentation as an item reranking task, combining few-shot prompting with majority-vote ensembling \cite{nguyenreasoning} via Reciprocal Rank Fusion (RRF) \citep{cormack2009reciprocal}. This formulation enables the LLM to better leverage its parametric knowledge of user preferences, assigning high positive scores to relevant items \cite{wu2024coral, hou2024large2}. Ensemble inference improves robustness and consistency by aggregating signals across reranking runs, as shown in Figures ~\ref{violin}. To support this ensembling approach, we propose a theoretically grounded augmentation strategy, using concentration of measure theory to derive bounds on augmentation accuracy. Rather than relying on LLM-generated embeddings, we directly utilize the augmented data in a graph contrastive learning (CL) framework. This approach aligns representations between original and augmented graphs, mitigating distributional shift and popularity bias \cite{zimmermann2021contrastive}, without modifying the base model architecture. We summarize our contributions as:
\begin{itemize}
    \item We propose \textbf{VoteGCL}, a model-agnostic, end-to-end framework that employs LLMs with few-shot in-context reranking. VoteGCL uses majority-vote reranking over candidate items to perform data augmentation and further integrates graph contrastive learning to effectively leverage the augmented data.
    \item We provide a theoretical analysis based on the concentration of measure, demonstrating that majority-vote reranking effectively extracts the LLM’s most reliable knowledge, thereby enhancing the quality of augmented interactions.
    \item Extensive experiments show that VoteGCL consistently outperforms state-of-the-art graph-based and LLM-enhanced recommender models across multiple benchmarks, improving recommendation accuracy while mitigating popularity bias. The implementation is available at: \url{https://github.com/minhkks/VoteGCL.git}.
\end{itemize}
\section{Related work}
\textbf{Graph-Augmented Recommendation.}
Graph-based RS represents user-item interactions as bipartite graphs, where nodes are users and items, and edges denote interactions. Early models used random walks to uncover latent relationships via transition probabilities~\cite{nikolakopoulos2019recwalk}. The introduction of Graph Convolutional Networks (GCNs) enabled embedding propagation better to capture collaborative signals~\cite{he2020lightgcn, wang2019neural}. Various graph augmentation techniques have been developed to address data sparsity and over-smoothing issues in graph RS. Traditional graph augmentation methods~\cite{ding2022data} can be categorized into three approaches: (1) Edge/Node Dropout, which removes redundant connections to improve robustness~\cite{wu2021self,zhou2023selfcf}; (2) Graph Diffusion, which enriches connectivity by adding new edges based on structural properties~\cite{fan2023graphda,yang2021enhanced,li2025mask,li2026leveraging}; and (3) Subgraph Sampling, which generates localized views for diverse training samples~\cite{zhang2024graph,xie2022contrastive}. While effective, these methods focus solely on graph structure and overlook valuable side information such as item content.

\textbf{LLMs-Enhanced Graph Recommendation.}
Recent advances explore integrating large language models (LLMs) into RS, typically along three lines: (1) Knowledge Enhancement, which uses LLMs’ reasoning capabilities to generate comprehensive user profiles and item descriptions by summarizing user behavior patterns and extracting rich semantic features from item content \cite{liu2024once, xi2024towards, wang2024llmrg}; (2) Model Enhancement, where LLM-generated textual embeddings of item descriptions and user profiles are integrated with collaborative signals through attention mechanisms or multi-modal alignment \cite{ren2024representation, qiao2024llm4sbr}; and (3) Interaction Enhancement employs LLMs to augment the user-item interaction space by generating synthetic interactions or directly scoring candidate items. This includes predicting missing ratings, generating counterfactual interactions to address data sparsity \cite{wei2024llmrec, song2024large}. Despite their promise, LLM-based methods face challenges such as representation misalignment and inconsistent inference due to the stochastic nature of generation.
\section{Preliminaries}
\begin{figure*}[t]  % [t] = top of page (khuyến nghị)
\centering
\includegraphics[width=\columnwidth]{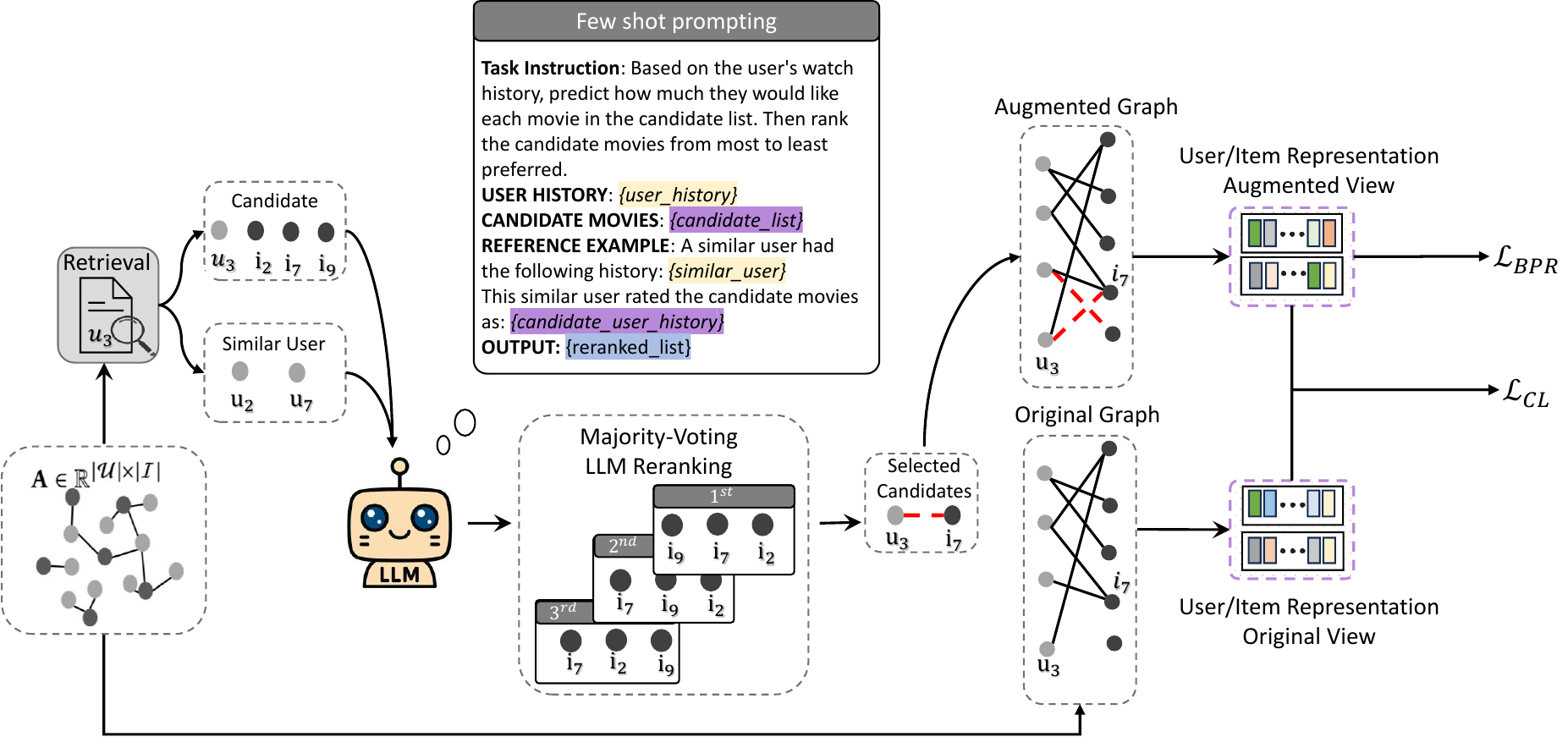}
\caption{\textbf{Overview of the VoteGCL}. The framework starts with a retrieval model that generates candidate items and identifies similar users for low-degree users using collaborative signals. An LLM reranks these candidates, and majority voting selects top-K high-confidence interactions for augmentation. The augmented data is then integrated into a graph contrastive learning framework, where two contrastive views are created via separate graph convolutions on a shared embedding matrix.}
\label{fig1}
\end{figure*}
\textbf{Problem Formulation.}
In graph-based recommendation systems, user-item interactions are modeled as a bipartite graph $\mathcal{G} = (\mathcal{V}, \mathcal{E})$, where $\mathcal{V} = \mathcal{U} \cup \mathcal{I}$ consists of user nodes $\mathcal{U}$ and item nodes $\mathcal{I}$, and $\mathcal{E}$ denotes the set of observed interactions. Each user $u \in \mathcal{U}$ and item $i \in \mathcal{I}$ is assigned a learnable embedding $\mathbf{e}_u, \mathbf{e}_i \in \mathbb{R}^d$, initialized in an embedding matrix $\mathbf{E}^{(0)} \in \mathbb{R}^{|\mathcal{V}| \times d}$ and updated during training. The model is optimized on mini-batches $\mathcal{B}$ sampled from $\mathcal{E}$. After training, we denote the resulting embedding matrix as $\mathbf{E}$, which is learned such that each user embedding $\mathbf{e}_u$ is close to the embeddings $\mathbf{e}_i$ of the items the user prefers. \\
\textbf{Vanilla Graph-based Retrieval.} We adopt LightGCN \cite{he2020lightgcn}, a simplified yet effective GCN, as the primary retrieval model to propagate information over the user-item graph $\mathcal{G}$ by updating node representations through neighborhood aggregation and combination. At the $l$-th iteration, the node representations are updated as: $\mathbf{E}^{(l)} = H(\mathbf{E}^{(l-1)}, \mathcal{G}),$ 
where $\mathbf{E}^{(l)}$ denotes the node representations at iteration $l$ which corresponds to the $l$-th layer of graph convolution, and $H$ is the neighborhood aggregation function. The embedding propagation process is mathematically formulated as a weighted combination of representations from different iterations. The final embeddings are computed as:
\begin{equation}\label{gcn}
\mathbf{E} = \frac{1}{1 + L} \left(\mathbf{E}^{(0)} + \tilde{\mathbf{A}}\mathbf{E}^{(0)} + \ldots + \tilde{\mathbf{A}}^L\mathbf{E}^{(0)}\right),
\end{equation}
where $\mathbf{E}$ denotes the final representations used for downstream prediction tasks, $L$ corresponds to the number of propagation layers, and $\tilde{\mathbf{A}} \in \mathbf{R}^{|\mathcal{V}| \times |\mathcal{V}|}$ is the normalized undirected adjacency matrix without self-connections. This iterative process allows the model to capture higher-order connectivity patterns in the graph. A common approach is to train the graph-based recommendation model using the Bayesian Personalized Ranking (BPR) \cite{rendle2012bpr} loss. The model learns to assign higher scores to observed interactions than to unobserved ones. The predicted score for user $u$ and item $i$ is computed via inner product: $\hat{y}_{ui} = \mathbf{e}_u^\top \mathbf{e}_i.$ The BPR loss is defined as:
\begin{equation}
    \mathcal{L}_{\text{BPR}} = \sum_{(u, i, j) \in \mathcal{O}} - \log \sigma\left( \hat{y}_{ui} - \hat{y}_{uj} \right),
\end{equation}
where $O=\left\{(u, i, j) \mid(u, i) \in \mathcal{E},(u, j) \in \mathcal{R}^{-}\right\}$ denotes the pairwise training data, $\mathcal{R}^{-}$is the unobserved interactions and $\sigma(\cdot)$ is the sigmoid function. In inference, items closest to $\mathbf{e}_u$ in embedding space are recommended.
\section{Methodology}
We propose a framework \textbf{VoteGCL}, as illustrated in Figure \ref{fig1}. We use a few-shot prompting LLM for inference via majority-voted reranking to construct the augmented dataset as detailed in Section~\ref{LLM rerank}. To effectively leverage the augmented data for graph recommendation, we employ the CL paradigm as detailed in Section \ref{contrastive}.
\subsection{Data Augmentation via Majority Reranking}\label{LLM rerank}

This section presents a targeted data augmentation strategy aimed at enriching the training graph with high-quality synthetic interactions. Augmenting artificial interactions for high-degree users can introduce unnecessary cost and noisy interactions. Therefore, we apply the augmentation only to low-degree users, defined as those whose number of interactions falls below the 25th percentile of the user interaction distribution. This strategy ensures that edges are augmented only for users with insufficient interaction data. An ablation study analyzing the effect of this threshold is provided in Section~\ref{ablations}. 

For each user $u$ in the selected set, we aim to select $p$ new interactions that do not appear in the original dataset. 
While LLMs have shown promising capabilities in ranking and recommendation tasks, they are limited by context length and thus cannot consider all potential items and related users within a single prompt. Hence, we employ LightGCN as a conventional retrieval model to obtain an ordered shortlist of $K$ non-interacted candidate items, denoted as $\textbf{I}_{\text{can}} = \{i_k\}_{k=1}^K$, along with a set of similar users based on collaborative embeddings generated by the retrieval model, inspired by \citeauthor{wu2024coral,wei2024llmrec}. Based on this information, we employ an LLM to rerank the sequence $\textbf{I}_{\text{can}}$, producing a sorted list denoted as $\textbf{I}_{\text{sorted}}$. To perform the reranking, we design a generalizable few-shot prompting template consisting of four components: first, a task-specific instruction; second, the user’s watch history enriched with flattened metadata; third, a structured output format; and fourth, one or more few-shot examples drawn from similar users. Each example includes the user’s history along with a ground-truth, reranked list sorted by rating. This prompt is dataset-agnostic and can be easily adapted to new domains by reformatting the metadata. An example prompt for the movie dataset is provided in Figure \ref{fig1}, with further details available in Appendix~\ref{prompt}.

To improve the consistency of reranking outcomes and better leverage the LLM’s reasoning capabilities across multiple trials, we perform $N$ independent rerankings of the candidate list $\textbf{I}_{\text{can}}$. The resulting sequences are then aggregated using a simplified variant of RRF. For each item $i_k \in \textbf{I}_{\text{can}}$, we compute an aggregated score as follows:
\begin{equation}\label{rff}
g(i_k) = \sum_{n=1}^{N} \frac{1}{\varsigma^{(n)}(i_k) + 1}
\end{equation}
where $\varsigma^{(n)}(i_k)$ denotes the position of item $i_k$ in the $n$-th reranked sequence. We select the top $p$ items with the highest scores to construct new artificial interactions for the user $u$. Following this data augmentation strategy, we theoretically prove that for any bounded, strictly decreasing function $g(\cdot)$, increasing the number of independent re-rankings $N$ asymptotically reveals the LLM’s optimal parametric knowledge for the reranking task. The formal statement is provided in Theorem~\ref{thm:aggregation_general}, and a detailed proof can be found in Appendix~\ref{theorem}.

\begin{theorem}[Concentration for Score Aggregation over Random Permutations]
\label{thm:aggregation_general}
Let $\mathcal{I} = \{i_1, i_2, \ldots, i_K\}$ be a set of $K$ items, and let $\mathcal{T}_K$ denote all permutations of $\{1,2,\ldots,K\}$. Let $P$ be any probability distribution over $\mathcal{T}_K$, and let $\varsigma^{(1)}, \ldots, \varsigma^{(N)}$ be independent random permutations sampled from $P$. Let $g: \mathbb{R} \to \mathbb{R}$ be a bounded, strictly decreasing function. Suppose $A, B \in \mathbb{R}$ are such that $A \leq g(x) \leq B$ for all $x$. For each item $i \in \mathcal{I}$, define the aggregated score
\[
S(i) = \sum_{n=1}^N g\big(\varsigma^{(n)}(i)\big),
\]
where $\varsigma^{(n)}(i)$ is the rank assigned to item $i$ by permutation $\varsigma^{(n)}$. Let consider two distinct items $i_j, i_k \in \mathcal{I}$, and $ \mu = \mathbb{E}_{\varsigma \sim P}\left[ \varsigma(i_j) - \varsigma(i_k) \right]$. If $\mu > 0$, then
\[
\Pr\big( S(i_j) > S(i_k) \big) \leq \exp\left( -\frac{N\mu^2}{2(B-A)^2} \right).
\]
\end{theorem}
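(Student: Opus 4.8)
The plan is to recognize the quantity $\Pr\big(S(i_j) > S(i_k)\big)$ as the lower tail of a sum of $N$ i.i.d.\ bounded random variables and then close it with a Hoeffding (equivalently, bounded–differences) estimate. Concretely, for each trial $n$ set $D_n := g\big(\varsigma^{(n)}(i_k)\big) - g\big(\varsigma^{(n)}(i_j)\big)$. Since $\varsigma^{(1)},\dots,\varsigma^{(N)}$ are i.i.d.\ draws from $P$, the $D_n$ are i.i.d.; and since $A \le g \le B$, each $D_n$ lies in $[-(B-A),\,B-A]$, so its range is at most $2(B-A)$. The event $\{S(i_j) > S(i_k)\}$ is exactly $\{\sum_{n=1}^N D_n < 0\}$, so it suffices to bound $\Pr\big(\sum_n D_n < 0\big)$.

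Write $\mu_g := \mathbb{E}_{\varsigma\sim P}\big[g(\varsigma(i_k)) - g(\varsigma(i_j))\big]$ for the per-trial drift, so $\mathbb{E}\big[\sum_n D_n\big] = N\mu_g$. Assuming $\mu_g > 0$, the event $\{\sum_n D_n < 0\}$ is a deviation of the centred sum below its mean by $N\mu_g$, and Hoeffding's inequality for i.i.d.\ variables of range $2(B-A)$ gives
\[
\Pr\Big(\textstyle\sum_{n=1}^N D_n < 0\Big) \;=\; \Pr\Big(\textstyle\sum_{n=1}^N (D_n - \mu_g) < -N\mu_g\Big) \;\le\; \exp\!\left(-\frac{2\,(N\mu_g)^2}{N\,\big(2(B-A)\big)^2}\right) \;=\; \exp\!\left(-\frac{N\mu_g^2}{2(B-A)^2}\right).
\]
(The same bound follows from McDiarmid's inequality applied directly to $(\varsigma^{(1)},\dots,\varsigma^{(N)}) \mapsto S(i_j)-S(i_k)$, whose coordinate-wise differences are bounded by $2(B-A)$.) Because $t \mapsto \exp(-Nt^2/(2(B-A)^2))$ is decreasing on $t>0$, it remains only to certify $\mu_g \ge \mu$ to recover the stated bound $\exp(-N\mu^2/(2(B-A)^2))$.

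The step that needs genuine care — and the main obstacle — is exactly this comparison $\mu_g \ge \mu$, i.e.\ relating the expected gap in $g$-scores to the expected gap $\mu = \mathbb{E}[\varsigma(i_j)-\varsigma(i_k)]$ in raw ranks. Strict monotonicity of $g$ guarantees that, trial by trial, $g(\varsigma^{(n)}(i_k)) - g(\varsigma^{(n)}(i_j))$ and $\varsigma^{(n)}(i_j) - \varsigma^{(n)}(i_k)$ always share a sign, which is what pins down the sign of the drift; for the magnitude I would insert a short preliminary lemma exploiting the form of $g$ (for the RRF weight $g(x)=1/(x+1)$, or more generally whenever $x\mapsto g(x)+x$ is non-increasing, one gets $\mu_g \ge \mu$ after taking expectations), and otherwise state the conclusion with $\mu$ re-read as the expected $g$-gap $\mu_g$ itself, under which the display above is already exact. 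Everything else — the i.i.d.\ reduction, the range bound $2(B-A)$, and the Hoeffding/McDiarmid application — is routine, so essentially all the work sits in controlling $\mu_g$ against $\mu$.
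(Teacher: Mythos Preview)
Your Hoeffding argument is exactly what the paper does: define the per-trial score difference, bound it in $[-(B-A),\,B-A]$, and apply Hoeffding to the i.i.d.\ sum. You also correctly isolate the only non-routine step --- passing from the bound Hoeffding actually gives, which is in terms of $\mu_g := \mathbb{E}[g(\varsigma(i_k))-g(\varsigma(i_j))]$, to one in terms of the rank gap $\mu := \mathbb{E}[\varsigma(i_j)-\varsigma(i_k)]$. The paper does not close this step: it simply writes $\mu$ in the Hoeffding exponent where the argument produces $|\mathbb{E}[D_n]|$, and defends only the \emph{sign} by invoking strict monotonicity of $g$. Even that sign claim is not valid in general (pointwise sign agreement of $g(a)-g(b)$ and $b-a$ does not survive taking expectations), and nothing is said about the magnitude. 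So the gap you flag is real, and it is present in the paper's proof too.

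In fact the statement is false as written. With $K=3$ and $g(1)=10$, $g(2)=1$, $g(3)=0$ (bounded, strictly decreasing), let $P$ put mass $p$ on a permutation with $(\varsigma(i_j),\varsigma(i_k))=(1,3)$ and mass $1-p$ on one with $(\varsigma(i_j),\varsigma(i_k))=(3,2)$. Then $\mu=1-3p$ while $\mathbb{E}[g(\varsigma(i_j))-g(\varsigma(i_k))]=11p-1$, so for any $p\in(1/11,\,1/3)$ we have $\mu>0$ yet the expected $g$-gap favours $i_j$, forcing $\Pr(S(i_j)>S(i_k))\to 1$ as $N\to\infty$. Your proposed patch (require $x\mapsto g(x)+x$ to be non-increasing) does not cover the RRF weight $g(x)=1/(x+1)$, whose derivative plus one is positive on the rank range, and even where it holds it yields only a one-sided pointwise comparison that does not deliver $\mu_g\ge\mu$ in expectation. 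Your fallback --- reading $\mu$ as $\mu_g$ --- is the correct version of the theorem that both arguments genuinely prove.
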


We instantiate $g(\cdot)$ as in Equation~\ref{rff} for its simplicity and prominence in prior work. More importantly, given a finite value of $K$, the $g(\cdot)$ function is strictly decreasing and bounded, satisfying the requirements of Theorem~\ref{thm:aggregation_general}. The corollary below specializes in the result for this choice, and its proof is presented in Appendix~\ref{theorem}.

\begin{corollary}[Reciprocal Rank Aggregation]
\label{cor:aggregation_rrf}
Let $g(x) = 1/(x+1)$ and define $S(i) = \sum_{n=1}^N 1/(\varsigma^{(n)}(i) + 1)$ for each $i \in \mathcal{I}$. For any $i_j, i_k \in \mathcal{I}$, let $\mu = \mathbb{E}_{\varsigma \sim P}\left[ \varsigma(i_j) - \varsigma(i_k) \right]$. If $\mu > 0$, then
\[
\Pr\big( S(i_j) > S(i_k) \big) \leq \exp\left( -\frac{N\mu^2}{2\left(1 - \frac{1}{K+1}\right)^2} \right).
\]
\end{corollary}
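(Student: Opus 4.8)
The plan is to obtain this corollary as a direct specialization of Theorem~\ref{thm:aggregation_general}, so the only genuine work is to verify the theorem's two hypotheses for the specific choice $g(x) = 1/(x+1)$ and to pin down the constant $B - A$ that enters the denominator. Since the parent theorem already establishes the concentration inequality for any bounded, strictly decreasing $g$, I would not re-derive the tail bound; I would instead instantiate it.

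First I would check monotonicity: differentiating gives $g'(x) = -1/(x+1)^2 < 0$ wherever $g$ is defined, so $g(x) = 1/(x+1)$ is strictly decreasing. This verifies the monotonicity hypothesis and, in particular, preserves the sign convention that ties $\mu = \mathbb{E}[\varsigma(i_j) - \varsigma(i_k)] > 0$ to $i_j$ being ranked worse on average than $i_k$, exactly as required by the general statement.

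The key step, and the only place needing care, is the boundedness hypothesis, because globally $g$ is \emph{not} bounded (it diverges as $x \to -1$). Here I would stress that $g$ is only ever evaluated at the attainable rank values $\varsigma^{(n)}(i) \in \{1, 2, \ldots, K\}$, a finite set. On this set $g$ takes values in $[g(K), g(1)] = [1/(K+1), 1/2]$, so it suffices to exhibit constants $A \le g(x) \le B$ valid over these ranks. Taking the tight lower bound $A = g(K) = 1/(K+1)$ together with the simple upper bound $B = 1$ (valid since $g(x) \le 1$ for every integer rank $x \ge 1$) yields $B - A = 1 - \tfrac{1}{K+1}$, precisely the quantity appearing in the corollary's exponent.

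Finally I would substitute $B - A = 1 - \tfrac{1}{K+1}$ into the conclusion of Theorem~\ref{thm:aggregation_general} to obtain
\[
\Pr\big(S(i_j) > S(i_k)\big) \le \exp\!\left(-\frac{N\mu^2}{2\left(1 - \tfrac{1}{K+1}\right)^2}\right),
\]
as claimed. The main obstacle is conceptual rather than computational: recognizing that although $g$ is unbounded on all of $\mathbb{R}$, restricting attention to the finitely many rank values a permutation can assign renders it bounded, so the parent theorem applies verbatim. As an optional remark I would note that the tight upper bound is in fact $g(1) = 1/2$, so one could sharpen the span to $\tfrac12 - \tfrac{1}{K+1}$; the corollary deliberately uses the looser but cleaner constant $1 - \tfrac{1}{K+1}$.
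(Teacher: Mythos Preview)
Your proposal is correct and follows essentially the same route as the paper: verify that $g(x)=1/(x+1)$ is strictly decreasing, observe that on the finite rank set $\{1,\ldots,K\}$ it is bounded with $A=1/(K+1)$ and $B=1$, and substitute $B-A=1-\tfrac{1}{K+1}$ into Theorem~\ref{thm:aggregation_general}. Your added remark that the tight maximum is actually $g(1)=1/2$ (so the span could be sharpened) is a nice observation the paper does not make explicit.
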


The probability distribution $P$ appearing in both the Theorem and Corollary is genuinely the LLM-induced distribution over permutations. Theorem~\ref{thm:aggregation_general} and Corollary~\ref{cor:aggregation_rrf} show that when item $i_k$ is expected to be ranked higher than item $i_j$, meaning it has a lower average rank and $\mu > 0$, the probability that $i_j$ receives a higher aggregated score than $i_k$ decreases exponentially with the number of permutations $N$. As $N \to \infty$, this error probability vanishes, indicating that score aggregation becomes increasingly reliable. Therefore, selecting the top $p$ items with the highest aggregated scores after enough rerankings recovers the optimal set of $p$ items under the LLM’s ideal parametric ranking.

We repeat the process for all targeted users, leading to $Q_{\alpha} \times |\mathcal{V}|\times p$ newly augmented edges. After data augmentation, we obtain an enriched graph $\mathcal{G}_{\text{aug}} = (\mathcal{V}, \mathcal{E}^+)$, where the original edge set $\mathcal{E}$ is augmented with newly generated interactions, thereby expanding $\mathcal{E}^+ = \mathcal{E} \cup \mathcal{E}_{\text{new}}$ while preserving the bipartite structure between users $\mathcal{U}$ and items $\mathcal{I}$.

\begin{table}[h]
\centering
\small
\setlength{\tabcolsep}{1.5pt}
\begin{tabular}{ccccc}
\hline
\textbf{Method} & \textbf{Virtual} & \textbf{Few-shot} & \textbf{Textual} & \textbf{Consistent} \\
 & \textbf{Interaction} & \textbf{Prompting} & \textbf{Embed.} & \textbf{Aug. Data} \\
\hline
VoteGCL & \checkmark & \checkmark & \texttimes & \checkmark \\
\hline
LLMRec  & \checkmark & \texttimes & \checkmark & \texttimes \\
\cite{wei2024llmrec} & & & & \\
\hline
RLMRec  & \texttimes & \texttimes & \checkmark & \texttimes \\
\cite{ren2024representation} & & & & \\
\hline
KAR  & \texttimes & \texttimes & \checkmark & \texttimes \\
\cite{xi2024towards} & & & & \\
\hline
\end{tabular}
\caption{Comparison of LLM augmented data methods}
\label{tab:llm-methods}
\end{table}

Unlike other LLM-augmented data approaches summarized in Table~\ref{tab:llm-methods}, VoteGCL does not incur the cost of generating textual embeddings for user/item profiling or adapting high-dimensional textual representations into recommendation models. Instead, VoteGCL focuses on generating high-quality interaction data and effectively leveraging it through a Graph CL paradigm, as detailed in Section~\ref{contrastive}. The key advantages of our LLM-augmented data are twofold: (1) a majority-vote reranking mechanism, which ensures more consistent and high-confidence user-item interactions, with detailed theoretical justification provided in Appendix~\ref{theorem}; and (2) a few-shot prompting strategy that incorporates collaborative signals from similar users, enabling better adaptation to user-specific preferences in the data augmentation task.
\subsection{Graph Contrastive Learning Paradigm}\label{contrastive}
We continue to employ LightGCN as our graph encoder due to its efficiency and effectiveness, though our framework is compatible with other graph backbones (e.g., NGCF \cite{wang2019neural}, XSimGCL \cite{yu2023xsimgcl}). Inspired by SimCLR \cite{chen2020simple}, we employ a CL paradigm between two representation views derived from the original and augmented data. This approach mitigates the distributional shift caused by potential noise introduced during graph augmentation. We share a common embedding matrix $\mathbf{E}^{(0)}$ to encode both the original graph $\mathcal{G}$ and its augmented version $\mathcal{G}_{\text{aug}}$. During each training epoch, node representations are generated as follows:
\[
\mathbf{E}^{(l)} = H(\mathbf{E}^{(l-1)}, \mathcal{G}_{\text{aug}}), \quad \mathbf{E}_{\text{org}}^{(l)} = H(\mathbf{E}^{(l-1)}, \mathcal{G})
\]
These embeddings serve as two distinct views of representation learning for contrastive supervision. The contrastive loss is then applied by treating embeddings of the same node across different views as positive pairs, and all others as negative:
\begin{equation}
    \mathcal{L}_{\text{cl}} = \sum_{i \in \mathcal{B}} -\log \frac{\exp \left(\mathbf{z}_i^{\top} \mathbf{z}_{i, \text{org}} / \tau\right)}{\sum_{j \in \mathcal{B}} \exp \left(\mathbf{z}_i^{\top} \mathbf{z}_{j, \text{org}} / \tau\right)},
\end{equation}
Here, $i, j$ represent user (item) nodes within a sampled mini-batch $\mathcal{B}$, and $\mathbf{z}, \mathbf{z}_{\text{org}}$ are the $L_2$-normalized $d$-dimensional representations obtained from the augmented graph $\mathcal{G}_{\text{aug}}$ and original graph $\mathcal{G}$. Note that, $\mathbf{z}_i=\frac{\mathbf{e}_i}{\left\|\mathbf{e}_i\right\|_2}$. By replacing $\tilde{A}$ with the adjacency matrices of graph $\mathcal{G}_{\text{aug}}$, $\mathbf{z}$ can be learned via Equation \ref{gcn}. The temperature parameter $\tau > 0$ (e.g., $\tau=0.2$) controls the sharpness of the similarity distribution. The CL loss can also be expressed as:
\begin{equation}
    \mathcal{L}_{cl}=\sum_{i \in \mathcal{B}}\left(-\mathbf{z}_i^{\top} \mathbf{z}_{i, \text{org}} / \tau+\log(\sum_{j \in \mathcal{B} /\{i\}} \exp (\mathbf{z}_i^{\top} \mathbf{z}_{j, \text{org}} / \tau))\right)
\end{equation}
Thus, $\mathcal{L}_{\text{cl}}$ enhances uniformity in the embedding space through its first component and promotes diversity through the second component, thereby improving model accuracy and mitigating popularity bias \cite{zimmermann2021contrastive, wang2020understanding}. The final node embeddings used for recommendation are obtained from the last propagation layer of the augmented graph. Specifically, we adopt $\mathbf{E}^{(L)}$, where $L$ corresponds to the total number of propagation layers, as the learned representation from the augmented graph $\mathcal{G}_{\text{aug}}$. This embedding captures the high-order structural information introduced by the augmentation and is subsequently employed to compute the BPR loss for optimizing recommendation performance. Finally, the overall training objective combines the recommendation loss with the CL loss:
\begin{equation}
    \mathcal{L}_{\text{main}} = \mathcal{L}_{\text{BPR}} + \lambda \mathcal{L}_{\text{cl}}
\end{equation}
where $\lambda\in(0,1)$ is a hyperparameter to control the weight of CL loss. The main advantages of this CL paradigm are that it leverages LLMs purely for knowledge-guided augmentation without embedding integration, thereby avoiding high-dimensional representations and complex fusion with collaborative signals. Additionally, VoteGCL shares the same learnable embedding matrix for both the original and augmented graphs, enabling more efficient training with reasonable model complexity. A detailed analysis of model complexity is provided in Appendix~\ref{complex}.
\section{Experiments}
\subsection{Experimental Setup}  
\textbf{Datasets.} We evaluate VoteGCL on four real-world datasets from diverse domains: Amazon (Scientific and Book) \citep{hou2024bridging}, MovieLens \cite{harper2015movielens}, Netflix \cite{wei2024llmrec}, and Yelp2018 \cite{wang2019neural}. Dataset statistics after preprocessing are summarized in Table~\ref{tab:dataset-stats}. By default, we select the top-1 result from majority-vote LLM reranking as virtual edges for data augmentation.

\textbf{Metrics.} We use NDCG@K and Recall@K to evaluate alignment between recommended items and user preferences, and APLT@K~\cite{abdollahpouri2019managing} to measure the average percentage of long-tail items among top-K recommendations as an indicator of popularity bias. Following the leave-one-out strategy, we use the latest interaction for testing, the second latest for validation, and the remaining for training. LightGCN generates item candidates for LLM reranking using \textit{gpt-3.5-turbo} \cite{achiam2023gpt} as prior work. 

\textbf{Baselines.} We compare VoteGCL against three categories: (1) base graph-based recommendation systems including LightGCN~\cite{he2020lightgcn}, NGCF~\cite{wang2019neural}, and MF-BPR~\cite{koren2009matrix}; (2) traditional augmented graph recommendation methods such as SGL~\cite{zhou2023selfcf}, GraphDA~\cite{fan2023graphda}, GraphAug~\cite{zhang2024graph}, and XSimGCL~\cite{yu2022graph}; and (3) LLM-enhanced recommendation systems including KAR~\cite{xi2024towards}, RLMRec~\cite{ren2024representation}, and LLMRec~\cite{wei2024llmrec}. We provide details of baselines in Appendix \ref{experiment}.
\subsection{Main Results}

% \begin{sidewaystable}
\begin{table}
\centering
\small
\setlength{\tabcolsep}{2pt}  % 9pt font size allowed for tables
\scalebox{0.7}{\begin{tabular}{ll|ccccccc|ccc|c}
\toprule
\textbf{Dataset} & \textbf{Metric} & \makecell{\textbf{MF-}\\\textbf{BPR}} & \makecell{\textbf{Light}\\\textbf{GCN}}& \textbf{NGCF} & \makecell{\textbf{Graph}\\\textbf{Aug}} & \makecell{\textbf{Graph}\\\textbf{DA}} & \textbf{SGL} & \textbf{XSimGCL} & \makecell{\textbf{NGCF+}\\\textbf{VoteGCL}} & \makecell{\textbf{XSimGCL+}\\\textbf{VoteGCL}} & \makecell{\textbf{LightGCN+}\\\textbf{VoteGCL}} & \textbf{Improv.}\\
\midrule
\multirow{3}{*}{\makecell{Movielen\\100K}} & R@10 & 0.0750 & 0.0766 & 0.0681 & 0.0689 & 0.0528 & 0.0818 & \underline{0.0852} & $0.0691^{0.3\%}$ & $0.0852^{0.0\%}$ & $\textbf{0.0886}^{15.7\%}$ & 4.0\%\\
& N@10 & 0.0335 & 0.0334 & 0.0308 & 0.0334 & 0.0251 & \underline{0.0407} & 0.0398 & $0.0372^{8.1\%}$ & $0.0411^{3.3\%}$ & $\textbf{0.0494}^{47.9\%}$ & 21.4\%\\
& A@10 & 0.0042 & 0.0097 & 0.0019 & 0.0016 & 0.0463 & 0.0227 & \underline{0.0583} & $0.0166^{733\%}$ & $\textbf{0.1852}^{217\%}$ & $0.0617^{536\%}$ & 217.7\%\\
\cmidrule(lr){1-13}
\multirow{3}{*}{\makecell{Movielen\\1M}} & R@10 & 0.0712 & 0.0817 & 0.0628 & 0.0643 & 0.0799 & 0.0812 & \underline{0.0857} & $0.0659^{4.9\%}$ & $0.0901^{5.1\%}$ & $\textbf{0.0928}^{13.6\%}$ & 8.3\%\\
& N@10 & 0.0365 & 0.0403 & 0.0311 & 0.0268 & 0.0385 & 0.0402 & \underline{0.0416} & $0.0375^{20.6\%}$ & $0.0428^{2.9\%}$ & $\textbf{0.0462}^{14.6\%}$ & 11.1\%\\
& A@10 & 0.0015 & 0.0027 & 0.0011 & 0.0056 & \underline{0.0104} & 0.0028 & 0.0098 & $0.0061^{454\%}$ & $0.0123^{25.5\%}$ & $\textbf{0.0145}^{437\%}$ & 39.4\%\\
\cmidrule(lr){1-13}
\multirow{3}{*}{Yelp2018} & R@10 & 0.0304 & 0.0425 & 0.0375 & 0.0352 & 0.0422 & \underline{0.0446} & 0.0435 & $0.0401^{6.9\%}$ & $0.0442^{1.6\%}$ & $\textbf{0.0487}^{14.6\%}$ & 8.8\%\\
& N@10 & 0.0147 & 0.0214 & \underline{0.0229} & 0.0125 & 0.0230 & 0.0223 & 0.0221 & $0.0201^{0.0\%}$ & $0.0234^{10.9\%}$ & $\textbf{0.0247}^{15.4\%}$ & 7.9\%\\
& A@10 & 0.0075 & 0.0038 & 0.0014 & 0.0012 & 0.0032 & \underline{0.0096} & 0.0068 & $0.0018^{28.6\%}$ & $0.0210^{208\%}$ & $\textbf{0.0349}^{818\%}$ & 263.5\%\\
\cmidrule(lr){1-13}
\multirow{3}{*}{\makecell{Amazon\\Scientific}} & R@10 & 0.1173 & 0.1251 & 0.0941 & 0.0986 & 0.1204 & \underline{0.1297} & 0.1291 & $0.1317^{39.9\%}$ & $0.1335^{3.4\%}$ & $\textbf{0.1385}^{10.7\%}$ & 6.8\%\\
& N@10 & 0.0850 & 0.0806 & 0.0648 & 0.0646 & 0.0758 & \underline{0.0911} & 0.0896 & $\textbf{0.1008}^{48.0\%}$ & $0.0936^{4.5\%}$ & $0.0993^{23.2\%}$ & 10.6\%\\
& A@10 & \underline{0.2563} & 0.2015 & 0.1839 & 0.0413 & 0.2034 & 0.2504 & 0.1779 & $0.2542^{38.2\%}$ & $0.1975^{11.0\%}$ & $\textbf{0.2795}^{38.7\%}$ & 9.1\%\\
\bottomrule
\end{tabular}}
\caption{Performance Comparison of Vanilla Graph Recommendation Methods on Multiple Datasets Before and After Data Augmentation. Each value in augmented data is shown as $\text{value}^\text{improvement}$, where the superscript indicates the percentage improvement over the baseline. Underlined values highlight the best baseline performance, while bold values highlight the best performance after data augmentation. "Improv" refers to the absolute improvement between these two.}
\label{tab:performance-comparison}
% \end{sidewaystable}
\end{table}

\begin{figure*}[t]  % [t] = top of page (khuyến nghị)
\centering
\includegraphics[width=\textwidth]{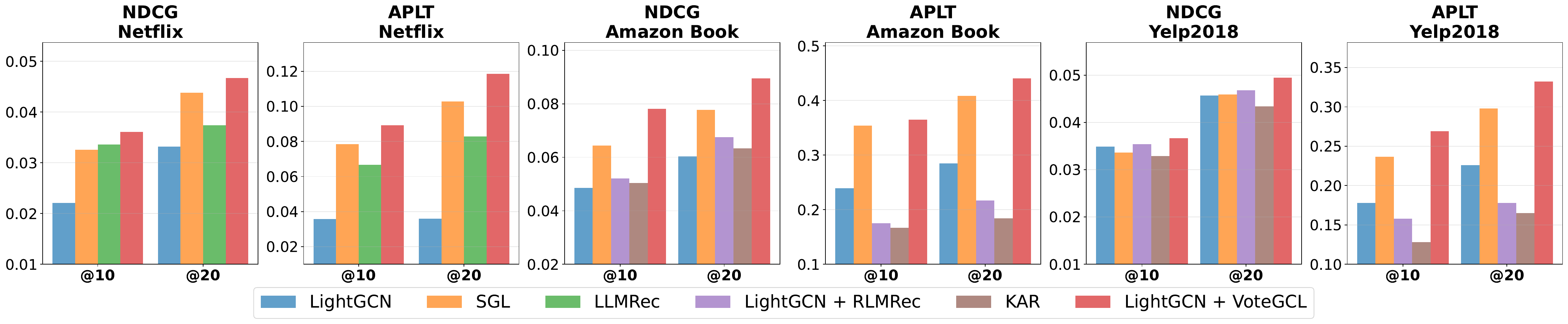}
\caption{Performance of LLM-enhanced graph recommenders on three datasets under Top K settings (K = 10 and 20), using dataset-specific models and leveraging available LLM-augmented data for fair comparison.}
\label{fig:llm_performance}
\end{figure*}
\subsubsection{Traditional graph recommendation.}
Table \ref{tab:performance-comparison} demonstrates that the proposed VoteGCL framework exhibits strong model-agnostic capabilities, consistently improving performance across diverse graph-based recommendation architectures, including LightGCN, XSimGCL, and NGCF. VoteGCL achieves substantial improvements in recommendation accuracy, with LightGCN+VoteGCL showing the most significant gains of $13.6\%$ on MovieLens 1M, and $14.6\%$ on Yelp2018 for R@10 metrics. Notably, the framework demonstrates exceptional scalability, delivering consistent improvements across both large-scale datasets (Yelp2018 and MovieLens 1M) and smaller datasets (Amazon Scientific and MovieLens 100K), with LightGCN+VoteGCL achieving particularly strong performance gains of $43.6\%$ on MovieLens 100K and $16.4\%$ on Amazon Scientific for NDCG@10, demonstrating superior adaptation to user preferences in reranking scenarios. More importantly, the framework effectively addresses popularity bias by dramatically enhancing long-tail item recommendations, as evidenced by APLT improvements ranging from $25.5\%$ to $818\%$ across different dataset-model combinations. These results demonstrate that VoteGCL successfully mitigates popularity bias while maintaining recommendation accuracy, promoting diversity by enabling models to learn better representations for underrepresented items regardless of dataset scale. The consistent performance gains across four distinct datasets spanning different domains and varying sizes validate the robust generalizability and universal applicability of VoteGCL.
\subsubsection{LLM-enhanced graph recommendation system.}
As shown in Figure~\ref{fig:llm_performance}, LightGCN + VoteGCL consistently outperforms all other methods across three benchmark datasets (Netflix, Amazon Book, and Yelp2018) under both @10 and @20 evaluation settings, achieving superior performance in terms of both accuracy and diversity. This includes outperforming traditional baselines (LightGCN, SGL) as well as LLM-enhanced models. While LLMRec, RLMRec, and KAR demonstrate notable improvements in ranking metrics such as NDCG, particularly when textual embeddings from LLMs are incorporated into the recommendation backbone, they also introduce a higher degree of popularity bias. This drawback is reflected in their lower APLT scores compared to SGL, which applies conventional data augmentation without relying on LLM-generated semantics. In contrast, VoteGCL alleviates this issue by generating high-confidence synthetic interactions through LLM prompting and integrating them effectively via graph contrastive learning, resulting in a more balanced trade-off between recommendation accuracy and fairness.
\begin{figure*}[t]  % [t] = top of page (khuyến nghị)
\centering
\includegraphics[width=\textwidth]{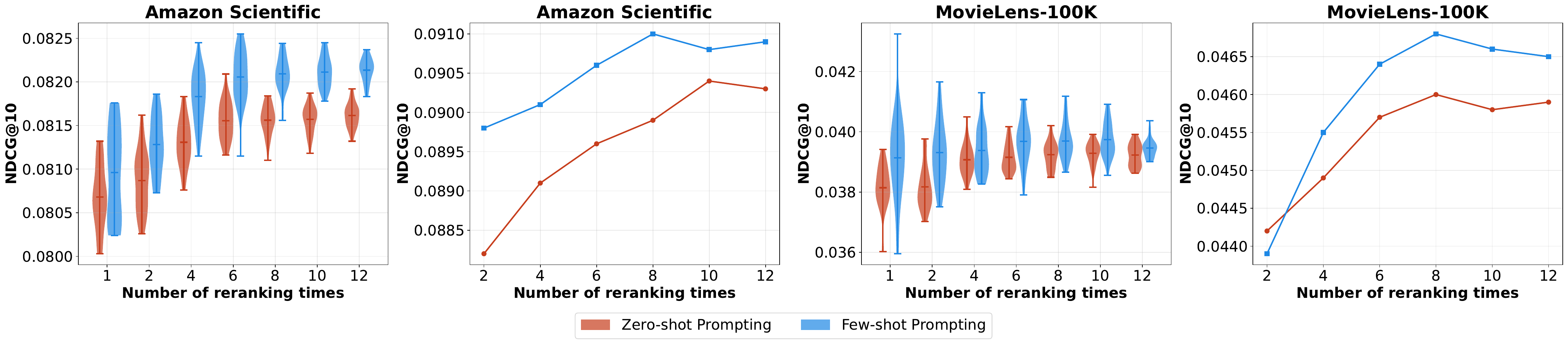}
\caption{Comparison of reranking performance between zero-shot and few-shot prompting across different numbers of majority-voting rounds. Results are averaged over 30 runs on the Amazon Scientific and MovieLens-100K datasets.}
\label{fig:comparison_zeroshot_fewshot}
\end{figure*}

\subsection{Ablation Studies}\label{ablations}

\begin{table}[t]
\centering
\small
\setlength{\tabcolsep}{2pt}
\begin{tabular}{lccccc}
\hline
\multirow{2}{*}{Dataset} & \multirow{2}{*}{Metric} & \multicolumn{4}{c}{Quantile of users to be augmented} \\
& & $25\%$ & $50\%$ & $75\%$ & $100\%$ \\
\hline
\multirow{2}{*}{MovieLens-100K} & N@10 & 0.0494 & 0.0482 & 0.0501 & \textbf{0.0503} \\
& A@10 & \textbf{0.0617} & 0.0562 & 0.0574 & 0.0596 \\
\hline
\multirow{2}{*}{Amazon Scientific} & N@10 & \textbf{0.0993} & 0.0983 & 0.0989 & 0.0991 \\
& A@10 & 0.2795 & \textbf{0.2939} & 0.2893 & 0.2792 \\
\hline
\multirow{2}{*}{Netflix} & N@10 & 0.0351 & 0.0336 & 0.0349 & \textbf{0.0352} \\
& A@10 & 0.0892 & \textbf{0.1089} & 0.0995 & 0.0919 \\
\hline
\end{tabular}
\caption{Performance under different ratios of users selected for augmentation}
\label{tab:virtual_edges}
\end{table}
\subsubsection{Number of Virtual Edges for Augmentation.}
Table~\ref{tab:virtual_edges} examines the impact of augmenting different user quantiles on VoteGCL's performance across three datasets. The results show that expanding augmentation to a larger portion of users does not consistently improve performance in either NDCG@10 or APLT@10. In some cases, performance even declines, suggesting that indiscriminate augmentation can introduce noise that offsets the benefits of additional interactions. These findings highlight the importance of targeted augmentation, specifically, focusing on the bottom $25\%$ of users with the fewest interactions. This approach improves generalization for cold-start users while maintaining overall model robustness and computational efficiency. Alternatively, the trade-off between recommendation accuracy and diversity can be more effectively managed by tuning the contrastive loss weight $\lambda$, allowing finer control over model behavior without increasing augmentation overhead.
\subsubsection{Zero-shot vs Few-shot Prompting LLM with Different Number of Majority-voting.}
Figure~\ref{fig:comparison_zeroshot_fewshot} illustrates the impact of the prompting strategy and the number of reranking rounds on the NDCG@10 performance of LLM-based item reranking. Across both datasets, few-shot prompting consistently outperforms zero-shot prompting, highlighting its effectiveness in capturing user preferences through collaborative signals. As the number of majority-voting rounds increases, performance improves and becomes more stable, particularly from 8 rounds onward, for both prompting strategies. Notably, the few-shot setting yields higher mean scores across 30 trials, indicating greater reliability and robustness in reranking outcomes. These findings suggest that effectively leveraging the LLM’s parametric knowledge via few-shot prompting leads to higher-quality augmented data and ultimately enhances the performance of VoteGCL. The performance at 6 voting rounds is competitive, making it a practical and cost-effective choice in resource-constrained scenarios. From these observations, we recommend combining few-shot prompting with a sufficiently large number of majority-voting rounds (e.g., $\geq 6$) to fully realize the benefits of LLM-based reranking in data augmentation.
\subsubsection{Controling Hyperparam $\lambda$ Contrastive Weight.}
\begin{figure}[t]  % [t] = top of page (khuyến nghị)
\centering
\includegraphics[width=0.8\columnwidth]{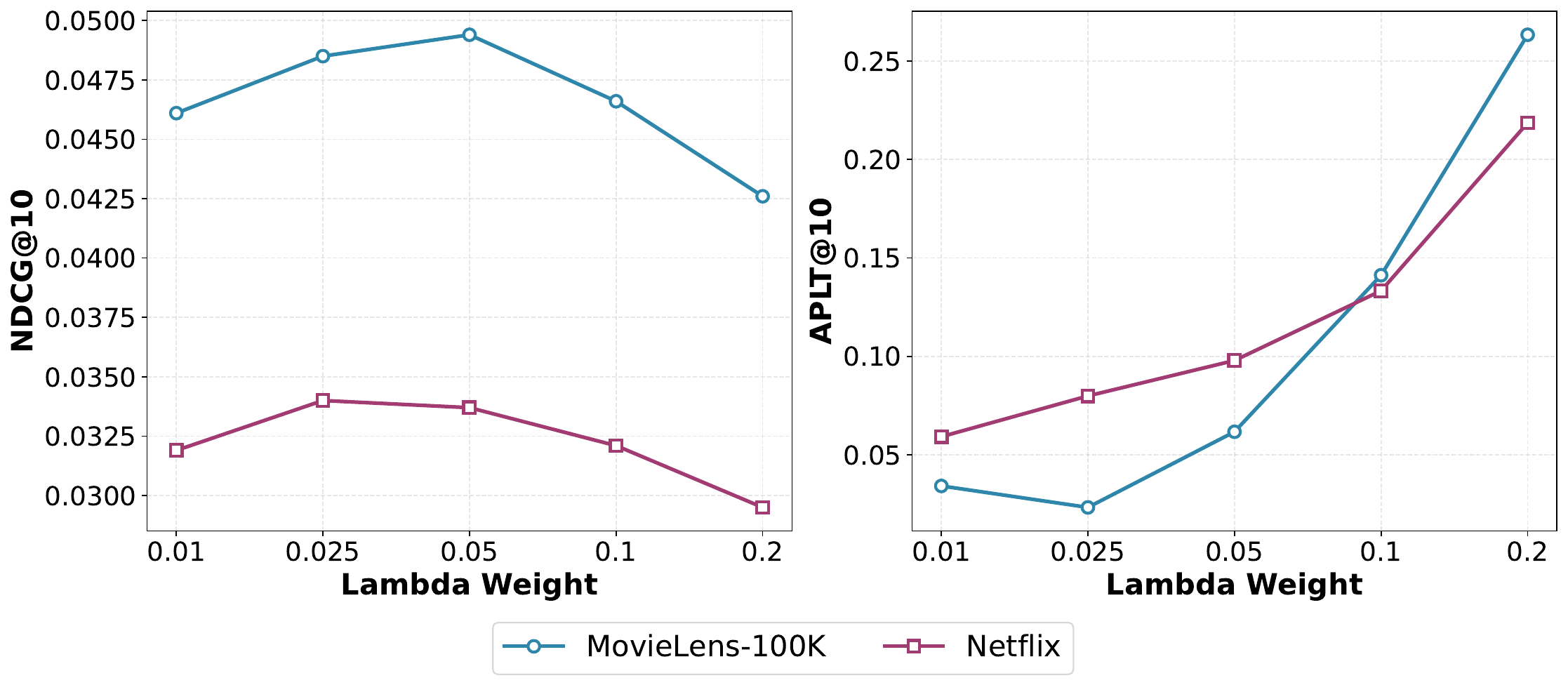}
\caption{Performance of VoteGCL under different contrastive weights}
\label{fig:lambda_contrastive}
\end{figure}
As illustrated in Figure~\ref{fig:lambda_contrastive}, the contrastive loss weight $\lambda \in (0,1)$ plays a critical role in balancing recommendation accuracy and popularity bias mitigation. When $\lambda$ increases, the APLT@10 metric improves significantly, indicating greater inclusion of long-tail items and enhanced diversity in recommendations. This effect is consistent across both MovieLens-100K and Netflix datasets. However, this gain in diversity comes with a trade-off: NDCG@10 gradually declines as $\lambda$ grows, reflecting a reduction in ranking precision. The results suggest that overly high values of $\lambda$ (e.g., 0.2) may overly prioritize long-tail exploration at the expense of accuracy. Based on these observations, we recommend tuning $\lambda$ within the range of 0.01 to 0.1 to strike an effective balance between accuracy and fairness, depending on the characteristics of the target dataset.
\subsubsection{Study of VoteGCL.}
\begin{table}[t]
\centering
\small
\setlength{\tabcolsep}{3pt}
\begin{tabular}{lcccccc}
\hline
\multirow{2}{*}{Methods} & \multicolumn{2}{c}{MovieLens-100K} & \multicolumn{2}{c}{Netflix} & \multicolumn{2}{c}{Amazon Book} \\
& N@10 & A@10 & N@10 & A@10 & N@10 & A@10 \\
\hline
VoteGCL & \textbf{0.0494} & \textbf{0.0617} & \textbf{0.0351} & \textbf{0.0892} & \textbf{0.0781} & \textbf{0.3654} \\
LightGCN & 0.0344 & 0.0097 & 0.0221 & \underline{0.0358} & 0.0486 & 0.2390 \\
w/o GCL & \underline{0.0382} & \underline{0.0102} & \underline{0.0245} & 0.0341 & \underline{0.0502} & \underline{0.2511} \\
w/o LLM augment & 0.0201 & 0.0068 & 0.0115 & 0.0126 & 0.0274 & 0.1243 \\
\hline
\end{tabular}
\caption{Ablation Study on VoteGCL Components}
\label{tab:ablation}
\end{table}
Table~\ref{tab:ablation} presents an ablation study evaluating the impact of key components in VoteGCL across three datasets. The full VoteGCL model consistently outperforms all baselines in both NDCG@10 and APLT@10. When the graph CL module is removed ("w/o GCL"), performance drops across all datasets, though it still outperforms LightGCN in certain metrics. This suggests that while the LLM-based augmentation is beneficial, without GCL, the model becomes more sensitive to noise from the synthetic data. Additionally, removing the LLM reranking with majority voting ("w/o LLM augment.") and selecting top-1 candidates without using LLM to vote leads to a significant performance collapse. This highlights the importance of majority voting in stabilizing the augmentation and leveraging textual descriptions modalities, making both GCL and LLM-based reranking essential for the overall robustness and effectiveness of VoteGCL.
\subsubsection{Additional Experimental Results.}
Additional ablation studies are provided in Appendix~\ref{experiment}. These include evaluations of different retrieval models for generating candidate items for LLM reranking, assessments of alternative LLMs for data augmentation (e.g., Qwen2.5~\cite{qwen2.5}), and an analysis of how varying the number of top-K items in the majority-voting process affects performance. Furthermore, we compare the cost of generating augmented data using VoteGCL in terms of price per accuracy gain and total generation cost, demonstrating that our approach provides a more cost-efficient alternative to other LLM-based methods.
\section{Conclusion}
We propose \textbf{VoteGCL}, a model-agnostic data augmentation framework that tackles inconsistency and distributional mismatch in LLM-enhanced recommendation. VoteGCL uses LLM-based reranking with majority voting to generate reliable synthetic interactions and applies contrastive learning to align original and augmented graphs without altering model architecture. Experiments across benchmarks show VoteGCL improves recommendation accuracy and reduces popularity bias, outperforming traditional and LLM-based data augmentation methods for a graph-based recommender.

\textbf{Limitations and scope.} VoteGCL reranks a pre-filtered candidate pool produced by a base retriever, which imposes a ceiling: items never retrieved cannot be recovered by the LLM, regardless of their true relevance. Consequently, VoteGCL is best viewed as a reliability amplifier on top of a retrieval stage rather than a mechanism for discovering items outside the retriever's reach. Two observations partially offset this limitation. First, Table \ref{tab:retrieval_comparison} shows that VoteGCL's gains transfer across retrievers and are larger when stronger retrievers (e.g., XSimGCL, SGL) are used, meaning the ceiling rises with retrieval quality. Second, because we target low-degree users and evaluate long-tail exposure via APLT, the augmented interactions consistently surface under-represented items that are within the retriever's top-K but would otherwise be ignored by collaborative scoring. Extending VoteGCL with a generative candidates proposal is a natural next step and would directly address the ceiling effect.

\bibliography{sn-bibliography.bib}

\newpage

\appendix
\section{More experimental results}\label{experiment}
\subsection{Experiment Setting}
We conduct experiments on several real-world datasets, with dataset statistics summarized in Table~\ref{tab:dataset-stats}. All models are trained for 100 epochs using 256-dimensional embeddings and a learning rate of $1\text{e}^{-3}$. Baseline hyperparameters are adopted from their original implementations. We employ LightGCN as the retrieval model to generate item candidates for LLM reranking. For augmentation, we follow prior work and use \textit{gpt-3.5-turbo}~\cite{achiam2023gpt}. Each graph-based recommendation model is configured with two layers. For VoteGCL, we evaluate contrastive weights in the range $(0.05, 0.1)$, with values specifically tuned for each dataset. All experiments are executed on a single NVIDIA A5000 GPU.
\begin{table}[h]
\setlength{\tabcolsep}{4pt}
\centering
\small  % Hoặc \footnotesize để nhỏ hơn
\begin{tabular}{lrrrr}
\hline
\textbf{Datasets} & \textbf{Users} & \textbf{Items} & \textbf{Inters.} & \textbf{Density}  \\
\hline
Amazon Scientific & 10,995 & 5,314 & 76,091 & $1.3\times 10^{-3}$  \\
Amazon Book & 11,000 & 9,332 & 212,484 & $2.1\times 10^{-4}$ \\
Movielen-100K & 610 & 9,701 & 100,836 & $1.7\times 10^{-2}$ \\
Movielen-1M & 6000 & 25,317 & 914,108 & $6.2\times 10^{-3}$ \\
Yelp2018 & 11,091 & 86,646 & 544,672 & $5.3\times 10^{-4}$ \\
Netflix & 13,187 & 17,336 & 68,933 & $3.2\times 10^{-4}$ \\
\hline
\end{tabular}
\caption{Dataset statistics}
\label{tab:dataset-stats}
\end{table}

\textbf{Baselines.} We compare VoteGCL against three categories of baseline models: 
\begin{itemize}
    \item \textbf{Vanilla graph-based recommendation baselines.} LightGCN~\cite{he2020lightgcn} simplifies GCNs by using only neighborhood aggregation. NGCF~\cite{wang2019neural} propagates embeddings over a bipartite graph to capture collaborative signals. MF-BPR~\cite{koren2009matrix} applies matrix factorization with BPR loss to model user-item interactions.
    \item \textbf{Traditional augmented graph recommendation baselines.} SGL~\cite{zhou2023selfcf} applies contrastive learning to LightGCN using view augmentations to reduce bias and noise. GraphDA~\cite{fan2023graphda} refines graph structure with user-user/item-item correlations and top-K sampling. GraphAug~\cite{zhang2024graph} employs a graph information bottleneck for denoised, adaptive self-supervision. XSimGCL~\cite{yu2022graph} replaces graph augmentations with noise-based perturbations for simplified and effective contrastive learning.
    \item \textbf{LLM-enhanced recommendation baselines.} KAR~\cite{xi2024towards} augments recommendations with LLM-extracted reasoning and factual knowledge via a hybrid-expert adaptor. RLMRec~\cite{ren2024representation} aligns LLM-based semantic representations with collaborative signals through cross-view learning. LLMRec~\cite{wei2024llmrec} enriches user-item graphs using LLM-guided edge reinforcement, item attribute enhancement, and user profiling with denoising.
\end{itemize}

\textbf{Detail Metrics.} We provide definitions of all evaluation metrics used in our experiments as follows:
\begin{itemize}
 \item Average Percentage of Long Tail Items (APLT): measures the average percentage of long tail items in the recommended lists
$$
\text{APLT@K}=\frac{1}{\left|U_t\right|} \sum_{u \in U_t} \frac{\left|\left\{i, i \in\left(L_u@K \cap \Gamma\right)\right\}\right|}{\left|L_u@K\right|}
$$
Where item $i$ has been rated in the training set. $L_u@K$ is the recommended list of top $K$ relevant items for user $u$ and $\left|U_t\right|$ is the number of users in the test set, $\Gamma$ represents for long tail item set.
    \item Recall@K measures the proportion of relevant items that appear in the top-$K$ recommended list for each user:
$$
\text{Recall@K} = \frac{1}{|U_t|} \sum_{u \in U_t} \frac{|L_u@K \cap R_u|}{|R_u|}
$$
Where $R_u$ is the set of ground-truth relevant items for user $u$.
    \item NDCG@K evaluates the ranking quality of the top-$K$ recommended items, emphasizing higher ranks:
$$
\text{NDCG@K} = \frac{1}{|U_t|} \sum_{u \in U_t} \frac{\text{DCG@K}_u}{\text{IDCG@K}_u},
$$
$$
\text{DCG@K}_u = \sum_{i=1}^K \frac{\mathbb{I}(L_u[i] \in R_u)}{\log_2(i + 1)},
$$
$$
\text{IDCG@K}_u = \sum_{i=1}^{\min(K, |R_u|)} \frac{1}{\log_2(i + 1)}
$$
where $L_u[i]$ denotes the $i$-th item in the recommended list for user $u$, and $\mathbb{I}(\cdot)$ is the indicator function.
\end{itemize}
\subsection{Number of Augmented Interactions}
\begin{table}[t]
\centering
\small
\setlength{\tabcolsep}{4pt}
\begin{tabular}{lcccc}
\hline
\multirow{2}{*}{Dataset} & \multirow{2}{*}{Metric} & \multicolumn{3}{c}{Number of selected top-K} \\
& & 1 & 2 & 3 \\
\hline
\multirow{2}{*}{MovieLens-100K} & N@10 & \textbf{0.0494} & 0.0448 & 0.0439 \\
& A@10 & \textbf{0.0617} & 0.0593 & 0.0607 \\
\hline
\multirow{2}{*}{Amazon Scientific} & N@10 & \textbf{0.0993} & 0.0911 & 0.0887 \\
& A@10 & \textbf{0.2795} & 0.2041 & 0.1855 \\
\hline
\end{tabular}
\caption{Performance comparison across the number of majority-voting}
\label{tab:virtual_edges1}
\end{table}
Table \ref{tab:virtual_edges1} shows that selecting only the top-1 item after majority-voting LLM reranking consistently yields the best performance across both datasets, in terms of both NDCG@10 and APLT@10. As $K$ increases, the additional items introduced into the augmented data may contribute noise, potentially degrading recommendation quality. This effect is more pronounced in the Amazon Scientific dataset, where both metrics decline significantly when $K>1$. We therefore recommend fixing $K=1$ and adjusting the contrastive weight to better navigate the accuracy-diversity trade-off.
\subsection{Different retrieval models}
\begin{table*}[t]
\centering
\small
\setlength{\tabcolsep}{4pt}
\begin{tabular}{lcccccc}
\hline
\multirow{2}{*}{Retrieval} & \multicolumn{3}{c}{MovieLens-100K} & \multicolumn{3}{c}{Amazon Scientific} \\
& Base N@10 & N@10 & A@10 & Base N@10 & N@10 & A@10 \\
\hline
XSimGCL & 0.0398 & 0.0495 & \textbf{0.0662} & 0.0896 & \textbf{0.0993} & 0.2795 \\
SGL & \textbf{0.0407} & \textbf{0.0512} & 0.0608 & 0.0911 & 0.0950 & \textbf{0.2850} \\
LightGCN & 0.0334 & 0.0494 & 0.0617 & 0.0853 & 0.0920 & 0.2720 \\
\hline
\end{tabular}
\caption{Performance comparison of different retrieval methods and VoteGCL on retrieval-augmented data. "Base N@10" denotes the top-10 recommendation performance of the retrieval models themselves, while "N@10" and "A@10" reflect the performance of VoteGCL using the corresponding retrieved candidates.}
\label{tab:retrieval_comparison}
\end{table*}
Table~\ref{tab:retrieval_comparison} presents a comparison of different retrieval models and their impact on the performance of the VoteGCL reranking framework across two datasets. The results show that stronger retrieval models, such as XSimGCL and SGL, generally yield higher quality candidate items, which in turn lead to improved reranking performance by VoteGCL, as reflected in higher N@10 and A@10 scores compared to when using LightGCN. For instance, on the MovieLens-100K dataset, using XSimGCL as the retrieval model enables VoteGCL to achieve an A@10 of 0.0662, outperforming the LightGCN-based pipeline. Similarly, in the Amazon Scientific dataset, SGL provides the best A@10 of 0.2850. These findings suggest that the choice of retrieval model has a significant influence on the overall effectiveness of the recommendation system. While we recommend deploying stronger retrieval models in industrial settings to maximize performance, LightGCN remains a practical option due to its simplicity, fast inference, and reasonably competitive accuracy.
\subsection{Different LLMs to augment data}
\begin{table*}[h]
\centering
\small
\setlength{\tabcolsep}{6pt}
\begin{tabular}{lcccccc}
\hline
\multirow{2}{*}{LLM} & \multicolumn{3}{c}{MovieLens-100K} & \multicolumn{3}{c}{Amazon Scientific} \\
% \cline{2-7}
& Base N@10 & N@10 & A@10 & Base N@10 & N@10 & A@10 \\
\hline
gpt-3.5-turbo & \textbf{0.0395} & \textbf{0.0494} & \textbf{0.0617} & \textbf{0.0821} & \textbf{0.0993} & \textbf{0.2795} \\
Qwen2.5-14B-Instruct & 0.0342 & 0.421 & 0.0598 & 0.0784 & 0.0926 & 0.2612 \\
\hline
\end{tabular}
\caption{Ranking performance comparison of different LLMs on two datasets, along with the performance of VoteGCL on the corresponding LLM-augmented data. "Base N@10" indicates the LLM's recommendation performances, while "N@10" and "A@10" represent VoteGCL performance with LLM-augmented inputs.}
\label{tab:llm_performance}
\end{table*}
Table~\ref{tab:llm_performance} compares the ranking performance of different large language models (LLMs) and their impact on the VoteGCL reranking framework across two datasets. Overall, GPT-3.5-turbo consistently outperforms Qwen2.5-14B-Instruct across all metrics, likely due to its stronger parametric knowledge and more robust instruction-following capabilities. This advantage is reflected in both the base recommendation performance and the downstream gains achieved by VoteGCL using the LLM-augmented inputs. Nevertheless, Qwen2.5-14B-Instruct still delivers competitive results, making it a viable free alternative for scenarios where cost or deployment constraints prohibit the use of proprietary APIs like GPT. This highlights a practical trade-off between performance and accessibility when selecting LLMs for recommendation tasks.
\subsection{Cost Efficency}
\subsubsection{Cost and accuracy trade off.}
\begin{figure*}[t]
    \centering
    \includegraphics[width=\textwidth]{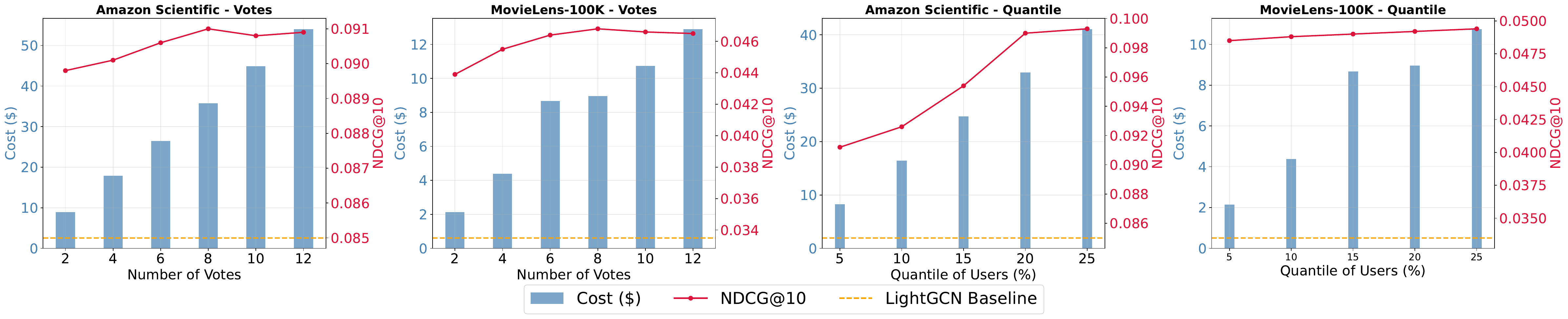}
    \caption{Cost and NDCG@10 performance trade-offs across two datasets (Amazon Scientific and MovieLens-100K). Left: Increasing the number of majority-vote LLM rerankings improves accuracy but raises inference cost. Right: Augmenting a larger quantile of user data yields accuracy gains with corresponding increases in cost.}
    \label{fig:cost_ndcg}
\end{figure*}
Figure \ref{fig:cost_ndcg} illustrates the cost-accuracy trade-offs associated with two complementary strategies for improving recommendation performance: increasing the number of LLM-based majority-vote rerankings and augmenting data for a subset of users. On the left, we observe that increasing the number of votes leads to consistent improvements in NDCG@10 across both datasets, albeit with rising inference cost. On the right, augmenting a greater proportion of users also yields notable accuracy gains with incremental cost. Notably, both evaluations show that substantial improvements in NDCG@10 can be achieved with relatively modest cost increases, for instance, from just 6 votes onward or augmenting only 10$\%$ of users. Based on these trends, we adopt a configuration using 8-way majority voting and data augmentation for 25\% of users, which provides a strong balance between accuracy and cost.
\subsubsection{Cost Comparison with Different Numbers of Candidates}
\begin{table*}[t]
\centering
\small
\setlength{\tabcolsep}{6pt}
\scalebox{0.9}{\begin{tabular}{lcccccc}
\hline
\multirow{2}{*}{\textbf{Number of candidates}} & \multicolumn{3}{c}{Amazon Scientific} & \multicolumn{3}{c}{MovieLens-100K} \\
& k=5 & k=10 & k=20 & k=5 & k=10 & k=20 \\
\hline
Cost (\$)& 35.73 & 41.05 & 48.02 & 8.07 & 8.97 & 10.14 \\
Recall@10 & 0.1378 & 0.1386 & \textbf{0.1437} & 0.0801 & \textbf{0.0886} & 0.0858 \\
NDCG@10 & 0.0984 & 0.0994 & \textbf{0.1043} & 0.0414 & \textbf{0.0494} & 0.0463 \\
Rate of Recall improvement to cost (\%)& \textbf{75.37} & 67.46 & 67.44 & 56.61 & \textbf{174.65} & 118.40 \\
Rate of NDCG improvement to cost (\%)& \textbf{96.86} & 87.92 & 89.09 & 296.76 & \textbf{534.07} & 380.74 \\
\hline
\end{tabular}}
\caption{Performance comparison across different numbers of retrieved candidates injected into the prompt. We report the total cost (in USD). The rates of improvement relative to cost are computed as the percentage gain over the LightGCN baseline divided by the corresponding cost}
\label{tab:performance_comparison}
\end{table*}
Table~\ref{tab:performance_comparison} presents the performance and cost trade-offs when varying the number of retrieved candidates injected into the prompt. Increasing the number of candidates retrieved from the LightGCN model enhances the likelihood of including relevant items, leading to improved recall and NDCG@10 scores. However, this comes at the cost of longer input prompts, resulting in increased token usage and inference cost. While the best performance is achieved at $k=20$, we observe that $k=10$ offers a more favorable balance between accuracy gains and cost, making it our default choice for downstream prompting.
\subsubsection{Cost Comparison with other methods.}
% \begin{sidewaystable}[h]
\begin{table}
\centering
\small
\setlength{\tabcolsep}{6pt}
\scalebox{0.75}{\begin{tabular}{llccccccc}
\hline
\textbf{Dataset} & \textbf{Method} & \textbf{Cost(\$)} & \textbf{N@10} & \textbf{A@10} & \textbf{Imp N}(\%) & \textbf{CIR N}(\%) & \textbf{Imp A}(\%) & \textbf{CIR A}(\%) \\
\hline
\multirow{2}{*}{Netflix} & Without reasoning & 9.47 & 0.0330 & 0.0639 & 49.32 & 5.21 & 134.36 & 14.19 \\
& With reasoning & 21.43 & 0.0351 & 0.0892 & 58.83 & 2.75 & 149.16 & 6.96 \\
\hline
LLMRec & & 21.14 & 0.0336 & 0.0667 & 52.04 & 2.46 & 86.31 & 4.08 \\
\hline
\end{tabular}}
\caption{Performance comparison on the Netflix dataset and the LLMRec baseline. We report the cost (in USD), NDCG@10 (N@10), and Accuracy@10 (A@10), along with the percentage improvements over the LightGCN baseline (Imp N and Imp A) and their corresponding cost-improvement ratios (CIR N and CIR A), calculated as improvement per dollar.}
\label{tab:netflix_llmrec}
% \end{sidewaystable}
\end{table}

Table~\ref{tab:netflix_llmrec} compares the performance of our method with and without reasoning prompts against the LLMRec \cite{wei2024llmrec} baseline on the Netflix dataset. By leveraging the reasoning capabilities of LLMs in our prompt design, we guide the model to produce a reranked list explicitly. When disabling reasoning in the prompt, the output is limited to a simple format (e.g., "A-B-C-...") without any natural language explanation, resulting in minimal output tokens and significantly reduced cost. Notably, even with this no-reasoning prompt, the method still achieves substantial improvements in NDCG@10 relative to cost, outperforming the LLMRec baseline in both effectiveness and cost-efficiency. Under the default reasoning setting, VoteGCL demonstrates strong overall performance while maintaining a favorable cost-performance trade-off compared to LLMRec.
\clearpage
\section{Model Complexity}\label{complex}
We compare the time complexity of VoteGCL with LightGCN and SGL-ED under a single-batch training setup, where in-batch negative sampling is adopted. Let $|\mathcal{E}|$ be the number of edges in the user-item bipartite graph, $d$ the embedding dimension, $B$ the batch size, $M$ the number of nodes in a batch, and $L$ the number of GCN layers.
\begin{table}[h]
\centering
\small
\setlength{\tabcolsep}{1pt}
\begin{tabular}{l|c|c|c}
\hline
\textbf{Component} & \textbf{LightGCN} & \textbf{SGL-ED} & \textbf{VoteGCL} \\
\hline
Adjacency & $O(2|\mathcal{E}|)$ & $O(2|\mathcal{E}|+4\rho|\mathcal{E}|)$ & $O(2(|\mathcal{E}|+|\mathcal{E}^+|))$ \\
Matrix & & & \\
\hline
Graph & $O(2|\mathcal{E}|Ld)$ & $O((2+4\rho)|\mathcal{E}|Ld)$ & $O(2Ld(|\mathcal{E}|+|\mathcal{E}^+|))$ \\
Encoding & & & \\
\hline
Prediction & $O(2Bd)$ & $O(2Bd)$ & $O(2Bd)$ \\
\hline
Contrast & - & $O(Bd+BMd)$ & $O(Bd+BMd)$ \\
\hline
\end{tabular}
\caption{The comparison of time complexity}
\label{tab:complexity}
\end{table}\\
Table \ref{tab:complexity} summarizes the time complexity comparison between LightGCN, SGL-ED, and our proposed VoteGCL. We observe that the main computational overhead of VoteGCL lies not in graph processing itself but in the data augmentation phase, where we apply Majority LLM Reranking to generate high-quality virtual interactions $\mathcal{E}^+$. This step occurs offline and is performed once before training, introducing no runtime burden during model optimization. In contrast, SGL-ED incurs significantly higher complexity during training due to explicit graph augmentations. It constructs two additional perturbed adjacency matrices, resulting in a higher cost for both adjacency preparation and graph encoding. Specifically, the encoding complexity of SGL-ED is roughly $(2 + 4\rho)$ times that of LightGCN, whereas VoteGCL only adds a modest $2|\mathcal{E}^+|Ld$ term. Furthermore, all models share the same prediction cost under BPR loss with in-batch negatives. For contrastive learning, VoteGCL and SGL-ED have similar complexity $\mathcal{O}(Bd + BMd)$, while LightGCN does not employ contrastive learning. Overall, VoteGCL achieves a favorable trade-off between performance and efficiency. By shifting the augmentation cost to a one-time LLM-based process, it maintains low training complexity comparable to LightGCN while benefiting from richer supervision akin to SGL-ED.
\section{Detail LLM Prompt}\label{prompt}
\begin{listing}[h]
\caption{Detail Prompt Template for Movie Reranking}
\label{lst:llm_prompt}
\begin{lstlisting}[basicstyle=\tiny\ttfamily, numbers=none]
You are a movie recommendation system. Given the user history in chronological order, recommend an item from the 
candidate pool with its index letter.

USER HISTORY:
{user_history}
The user history shows the movies the user has watched in chronological order. 
The first movie is the oldest one the user watched, and the last movie is the most recent one. 
Each entry follows this format: [Position]. [Movie Title] 
([Release Year]) [Genres] - Rating: [User Rating]
The user rating ranges from 1.0 to 5.0, with 5.0 being the highest level of enjoyment.

TOP 10 CANDIDATE MOVIE LIST:
{candidate_list} 
Each candidate movie is listed with an index letter 
(A, B, etc.) followed by the movie title, 
release year, and genres. 

REFERENCE EXAMPLE: A similar user had the following history: 
{similar_user_history}
This user rated the candidate product as:
{candidate_user_history}

Your task is to reorder these candidates based on the user's preferences, where the first movie should be the one 
the user would most likely enjoy, and subsequent movies represent decreasing levels of preference, and based on 
the reference example, too. Please analyze and summarize the user's preferences in paragraph form, and write it 
in <think></think> tags at the beginning of your response. Then, explain your reasoning for the ordering 
of the candidate movies and write it in <reasoning></reasoning> tags. Finally, provide your recommended ordering 
of ALL candidates movies as a hyphen-separated list of indices (e.g., A-B-C-D-E-F-G-H-I-J) and place it in 
<output></output> tags.

Make sure to include ALL 10 movie indices in your response. 
The first index should represent the movie you believe the user would enjoy most, with subsequent indices 
representing decreasing levels of preference.
\end{lstlisting}
\end{listing}

Template ~\ref{lst:llm_prompt} illustrates the prompt template we designed for the movie recommendation task, using the MovieLens dataset as a primary example. However, the template is designed to be generalizable and can be easily adapted to other datasets with minimal modification. One key design decision is the flattening of user history metadata into a structured natural language format: [Position]. [Movie Title] ([Release Year]) [Genres] - Rating: [User Rating]. This flattening not only makes the prompt more readable and coherent to the language model but also facilitates compatibility with diverse datasets by allowing simple reformatting of metadata fields. Furthermore, we cast the recommendation task as a reranking problem using a verbalized format. Rather than requiring the model to regenerate full movie titles, the prompt provides an indexed list of candidate movies and asks the model to output a ranked sequence of index letters. This design significantly reduces generation cost and latency, while maintaining high interpretability. To encourage transparency and structured reasoning, the prompt explicitly instructs the model to include preference analysis and justification using $<think>$ and $<reasoning>$ tags, followed by the final ordered list in an $<output>$ tag.

Additionally, the prompt incorporates collaborative filtering by injecting a reference example from a similar user. We identify the most similar user by computing cosine similarity between user embeddings, then include their history in the prompt. The last three rated items from a similar user serve as ground truth to further guide the model. This structured response format improves reproducibility and effectively leverages collaborative signals in recommendations. In practical deployments, the reasoning component can be made optional to reduce inference cost while preserving the core reranking functionality.
\section{Theoretical results}\label{theorem}
\begin{proof}[Proof of Theorem~\ref{thm:aggregation_general}]
    For convenience, we recall our notations in the main paper, followed by our formal proof. 

We consider a set of $K$ items retrieved by a retrieval system, denoted by $\mathcal{I} = \{i_1, i_2, \dots, i_K\}$. Let $\mathcal{T}_K$ denote the set of all permutations over $K$ elements:

$$
\mathcal{T}_K = \left\{ \sigma : \{1, 2, \ldots, K\} \to \{1, 2, \ldots, K\} \,\middle|\, \sigma \text{ is bijective} \right\}.
$$

Each permutation $\sigma \in \mathcal{T}_K$ represents a ranking assignment over the items in $\mathcal{I}$. Specifically, $\sigma(i_k)$ denotes the **\textbf{rank}** assigned to item $i_k$. In other words, lower values of $\sigma(i_k)$ indicate higher relevance or priority in the ranked list.

We define a random variable $\varsigma$ taking values in $\mathcal{T}_K$, with probability distribution

\[P : \mathcal{T}_K \to [0, 1], \text{ such that} \sum_{\sigma \in \mathcal{T}_K} P(\varsigma = \sigma) = 1.\]

In our work, all probabilities and distributions are conditioned on the prompt information, denoted as $\textbf{c}$, which includes user and item information. We focus on one user $u$, meaning that $\textbf{c}$ is fixed. Hence, we eliminate the appearance of $\textbf{c}$-dependence in our proof for better presentation.  Throughout this work, we use $\sigma$ to represent a permutation in $\mathcal{T}_K$, and write $P(\sigma)$ as shorthand for $P(\varsigma = \sigma)$, the probability assigned to permutation $\sigma$.

We draw $N$ independent permutations $\varsigma^{(1)}, \varsigma^{(2)}, \ldots, \varsigma^{(N)}$ from the distribution $P$. For each item $i_k \in \mathcal{I}$, where $k = 1, \ldots, K$, we define $\varsigma^{(n)}(i_k)$ as the rank assigned to item $i_k$ in the $n$-th permutation. Fix two items \(i_j\) and \(i_k\), with \(j, k \in \{1, \ldots, K\}\). Define
\[
S_j = \sum_{n=1}^{N} g\big(\varsigma^{(n)}(i_j)\big), \quad S_k = \sum_{n=1}^{N} g\big(\varsigma^{(n)}(i_k)\big),
\]
where $g: \mathbb{R} \rightarrow\mathbb{R}$ denotes any bounded strictly decreasing function. This assumption is natural and intuitive because items with a lower rank should have a higher score. 

% We are now ready to prove Theorem~\ref{thm:majority-voting}. 

% \begin{proof}[Proof of Theorem~\ref{thm:majority-voting}]

Let us define the difference random variables:
\[
D_n = g\big(\varsigma^{(n)}(i_j)\big) - g\big(\varsigma^{(n)}(i_k)\big), \quad \text{for } n = 1, \ldots, N.
\]
Then we can write:
\[
S_j - S_k = \sum_{n=1}^{N} D_n.
\]

Because the function $g(\cdot)$ is bounded, there exist real numbers $A < B$ such that for all $x$, $A \leq g(x) \leq B$. Therefore, for any permutation $\varsigma$, the difference $D_n = g(\varsigma^{(n)}(i_j)) - g(\varsigma^{(n)}(i_k))$ satisfies $-(B-A) \leq D_n \leq B-A$.

Let $\mathbb{E}[D_n]$ be the expected score difference between $i_j$ and $i_k$ under the distribution $P$.

Applying Hoeffding's inequality: if $D_1, \ldots, D_N$ are independent with $D_n \in [a, b]$, then for any $t > 0$,
\[
\mathbb{P} \left( \sum_{n=1}^{N} (D_n - \mathbb{E}[D_n]) \ge t \right)
\le \exp\left( - \frac{2t^2}{N(b - a)^2} \right).
\]

In our setup, $a = -(B-A)$ and $b = B-A$, so $b-a = 2(B-A)$. Thus,
\begin{align*}
    \mathbb{P}(S_j > S_k)
&= \mathbb{P}\left( \sum_{n=1}^{N} D_n > 0 \right) \\
&= \mathbb{P}\left( \sum_{n=1}^{N} (D_n - \mathbb{E}[D_n]) > -N\mathbb{E}[D_n] \right).
\end{align*}

If $\mathbb{E}[D_n] < 0$, then $-N \mathbb{E}[D_n] > 0$ and Hoeffding's inequality gives

\begin{align*}
    \mathbb{P}(S_j > S_k)
&\leq \exp\left(-\frac{2(-N\mu)^2}{N[2(B-A)]^2}\right)\\
&= \exp\left( -\frac{N\mu^2}{2(B-A)^2} \right).
\end{align*}

Let us define $\mu = \mathbb{E}[\varsigma^{(n)}(i_j) - \varsigma^{(n)}(i_k)]$. Because $g(\cdot)$ is bounded and strictly decreasing, $\mathbb{E}[D_n] > 0$ is equivalent that $\mu > 0$
\end{proof}

This formally characterizes the concentration of the aggregated comparison as $N$ increases. The derived concentration result provides a rigorous guarantee for comparing the aggregated scores for pairs of items under repeated sampling from a stochastic permutation model. The bound demonstrates that, as long as the expected difference in scores \(\mu = \mathbb{E}[D_n]\) is strictly negative, the probability that the "worse" item \(i_j\) outperforms the "better" item \(i_k\) after \(N\) samples decays exponentially in \(N\). Notably, the exponent depends quadratically on the absolute value of the mean difference \(|\mu|\) and inversely on the squared range of the score differences \((B-A)^2\). Thus, for a reasonably well-separated pair of items and a bounded scoring function, even a modest number of samples \(N\) suffices to make the ordering highly reliable. 

Before proceeding with the proof for Corollary~\ref{cor:aggregation_rrf}, we provide its self-contained formal statement in Corollary~\ref{cor:aggregation_rrf_full}.

\begin{corollary}[Reciprocal Rank Aggregation]
\label{cor:aggregation_rrf_full}
Let $\mathcal{I} = \{i_1, i_2, \ldots, i_K\}$ be a set of $K$ items, and let $\mathcal{T}_K$ denote the set of all permutations over $\{1, 2, \ldots, K\}$. Let $P$ be any probability distribution over $\mathcal{T}_K$, and let $\varsigma^{(1)}, \ldots, \varsigma^{(N)}$ be independent random permutations drawn from $P$.

Define the score function $g(x) = 1/(x+1)$, and for each item $i \in \mathcal{I}$, define its aggregated score as
\[
S(i) = \sum_{n=1}^N \frac{1}{\varsigma^{(n)}(i) + 1},
\]
where $\varsigma^{(n)}(i)$ denotes the rank assigned to item $i$ by permutation $\varsigma^{(n)}$.

Fix two distinct items $i_j, i_k \in \mathcal{I}$, and define the expected rank gap $\mu = \mathbb{E}[\varsigma^{(n)}(i_j) - \varsigma^{(n)}(i_k)]$.
If $\mu > 0$, then the probability that item $i_j$ receives a higher aggregated score than item $i_k$ satisfies the concentration bound
\[
\Pr\big( S(i_j) > S(i_k) \big) \leq \exp\left( -\frac{N\mu^2}{2\left(1 - \frac{1}{K+1}\right)^2} \right).
\]
\end{corollary}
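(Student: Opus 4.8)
The plan is to derive Corollary~\ref{cor:aggregation_rrf_full} as a direct specialization of Theorem~\ref{thm:aggregation_general}, so the work reduces to identifying the tightest bounds $A$ and $B$ for the particular score function $g(x) = 1/(x+1)$ on the relevant domain. First I would note that the ranks $\varsigma^{(n)}(i)$ take values in $\{1, 2, \ldots, K\}$, so the only inputs $g$ ever receives are integers in that range; hence I only need $A \le g(x) \le B$ to hold for $x \in \{1, \ldots, K\}$. Since $g(x) = 1/(x+1)$ is strictly decreasing and positive on this set, its maximum is attained at $x = 1$, giving $g(1) = 1/2 = B$, and its minimum is attained at $x = K$, giving $g(K) = 1/(K+1) = A$. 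Therefore $B - A = \tfrac{1}{2} - \tfrac{1}{K+1}$.

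The next step is the purely arithmetic simplification of the denominator in the exponent. Plugging $B - A = \tfrac12 - \tfrac{1}{K+1}$ into Theorem~\ref{thm:aggregation_general} yields the bound $\exp\!\big(-N\mu^2 / (2(\tfrac12 - \tfrac{1}{K+1})^2)\big)$. To match the stated form $\exp\!\big(-N\mu^2 / (2(1 - \tfrac{1}{K+1})^2)\big)$, I would factor: $\tfrac12 - \tfrac{1}{K+1} = \tfrac12\big(1 - \tfrac{2}{K+1}\big)$, so $(B-A)^2 = \tfrac14(1 - \tfrac{2}{K+1})^2$, and the exponent denominator becomes $2 \cdot \tfrac14 (1 - \tfrac{2}{K+1})^2 = \tfrac12 (1 - \tfrac{2}{K+1})^2$. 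Strictly speaking this gives a \emph{tighter} bound than the one written in the corollary, since $1 - \tfrac{2}{K+1} \le 1 - \tfrac{1}{K+1}$ for all $K \ge 1$, so the exponent $-N\mu^2 / (\tfrac12(1-\tfrac{2}{K+1})^2)$ is more negative. The claimed bound with $(1 - \tfrac{1}{K+1})^2 = (\tfrac{K}{K+1})^2$ in the denominator then follows by monotonicity of $\exp$: enlarging the denominator only increases the exponential, so any valid bound with the smaller denominator remains valid with the larger one. Alternatively, one can view $1 - \tfrac{1}{K+1}$ as simply the crude two-sided range bound using $A = 0 \le g(x)$ and $B = \tfrac12$, paired with $-(B-A) \le D_n \le B - A$; I would mention both readings and use whichever makes the citation to the theorem cleanest.

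I would then conclude by invoking Theorem~\ref{thm:aggregation_general} verbatim with the chosen $(A, B)$ pair, noting that all its hypotheses are met: $g(x) = 1/(x+1)$ is bounded and strictly decreasing on $\mathbb{R}_{>-1}$ (in particular on the finite rank set), $P$ is an arbitrary distribution over $\mathcal{T}_K$, the permutations are drawn i.i.d., and $\mu > 0$ by assumption. The main obstacle here is essentially bookkeeping rather than mathematics: being careful about whether to report the optimal constant $\tfrac12 - \tfrac1{K+1}$ or the looser $1 - \tfrac1{K+1}$, and making sure the direction of the inequality is preserved when relaxing the constant. There is no genuine analytic difficulty, since the heavy lifting — the Hoeffding argument and the equivalence between $\mathbb{E}[D_n] < 0$ and $\mu > 0$ via monotonicity of $g$ — is already carried out in the proof of Theorem~\ref{thm:aggregation_general}.
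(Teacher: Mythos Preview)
Your approach is essentially the paper's: both proofs identify bounds $A \le g(x) \le B$ for $g(x)=1/(x+1)$ on the rank set $\{1,\ldots,K\}$ and substitute $B-A$ into Theorem~\ref{thm:aggregation_general}. The only divergence is in which constants are chosen. The paper takes $B=1$ and $A=1/(K+1)$, so that $B-A = 1 - 1/(K+1)$ drops out exactly as in the stated bound (its claim that $g$ ``is maximized when $x=1$, so $B=1$'' is a slip, since $g(1)=1/2$, but $B=1$ is nonetheless a valid upper bound and the substitution is legitimate). You instead compute the sharp $B=1/2$, obtain the tighter $B-A = 1/2 - 1/(K+1)$, and then relax to the stated form by monotonicity of the exponential; this is correct and incidentally shows the corollary understates the concentration rate by roughly a factor of four in the exponent. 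One minor correction to your aside: the ``alternative reading'' with $A=0$, $B=1/2$ gives $B-A=1/2$, not $1-1/(K+1)$, so it does not reproduce the paper's constant either; the specific value $K/(K+1)$ arises precisely from pairing the loose $B=1$ with the sharp $A=1/(K+1)$.
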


\begin{proof}[Proof of Corollary~\ref{cor:aggregation_rrf}]
For $x \in \{1, 2, \ldots, K\}$, $g(x)$ is maximized when $x = 1$, so $B = 1$. It is minimized when $x = K$, so $A = 1/(K+1)$. Therefore, for all $i_j, i_k \in \mathcal{I}$ and for all permutations, the values of $g$ satisfy $A \leq g(x) \leq B$, and $B - A = 1 - \frac{1}{K+1} = \frac{K}{K+1}$. Substituting $A$ and $B$ into Theorem~\ref{thm:aggregation_general} gives the stated bound.
\end{proof}

\end{document}